\theoremstyle{definition}
\newtheorem{defn}{Definition}
\providecommand{\lemmaname}{Lemma}
\providecommand{\propositionname}{Proposition}
\theoremstyle{plain}
\newtheorem{theorem}{Theorem}
\theoremstyle{plain}
\newtheorem{lem}{\protect\lemmaname}
\theoremstyle{plain}
\newtheorem{coroll}{Corollary}
\theoremstyle{plain}
\newtheorem{prop}{\protect\propositionname}
\theoremstyle{plain}
\newcommand{\ot}{\frac{1}{2}}
\begin{document}

\setlength{\droptitle}{-1cm}   % This is your set screw

\title{\textbf{Troll Farms}\thanks{We thank James Best, Brendan Daley, Paul Heidhues, \'{A}ngel Hern\'{a}ndo-Veciana, Navin Kartik, Qianjun Lyu, Gerard Padr\'{o} i Miquel, Johannes Schneider,
Ludvig Sinander, Dana Sisak and seminar audiences in Bielefeld, Bremen, Dortmund,  Lancaster, L\"{u}neburg, Complutense Madrid, and Riga as well as audiences of CMID, EPCS, MAPE, Oligo workshop, SAEe,  SAET, BSE Summer Forum, and the UC3M Micro Retreat for valuable feedback. We also gratefully acknowledge the support from the following grants: CEX2021-001181-M by MICIU/AEI /10.13039/501100011033; PGC2018-098510-B-I00, PID2020-118022GB-I00, RYC2021-032163-I, RYC2022-036590-I, PID2022-141823NA-I00, and PID2023-151783NB-I00 by the Agencia Estatal de Investigaci\'{o}n (Spain); EPUC3M11 (V PRICIT) and H2019/HUM-5891 by the Comunidad de Madrid (Spain).}}
\author{Philipp Denter\thanks{Universidad Carlos III de Madrid, Department of Economics, Calle de Madrid 126, 29803 Getafe, Spain. E-Mail: \href{mailto:pdenter@eco.uc3m.es}{pdenter@eco.uc3m.es}.} \and Boris Ginzburg\thanks{Universidad Carlos III de Madrid, Department of Economics, Calle de Madrid 126, 29803 Getafe, Spain.  E-Mail: \href{mailto:bginzbur@eco.uc3m.es}{bginzbur@eco.uc3m.es}.}}

\date{\today}

\maketitle

%\vspace{-0.5cm}

\begin{abstract}

We study how coordinated disinformation campaigns affect elections. We develop a constrained information design model in which a sender deploys uninformative messages that mimic voters' exogenous informative signals. Voters initially opposed to the sender's preferred outcome receive favourable messages, while those in favour are targeted with unfavourable messages to dilute adverse information. The sender's ability to manipulate political outcomes increases with greater precision of voters' independent signals, but decreases with polarisation. When messaging is costly, the sender may stop targeting marginally opposing voters while moderating message extremism among supporters.

%Voters initially opposed to the sender’s preferred outcome are targeted with favourable messages, while those in favour receive unfavourable messages to dilute adverse information.

%Political agents often attempt to influence electoral outcomes through troll farms -- organisations that disseminate messages emulating genuine information. We develop a constrained information design model in which a sender deploys messages that mimic voters' exogenous informative signals. We characterise the sender's optimal strategy: voters who oppose the sender's preferred outcome at the prior are targeted with favourable messages, while voters who favour the sender's preferred outcome are targeted with unfavourable messages to dilute adverse information. In contrast to standard persuasion models, greater precision of voters' independent signals increases the sender's persuasive power. Increased polarisation can mitigate this power and restore electoral efficiency. When operating troll farms is costly, the sender may stop targeting marginally opposing voters while moderating message extremism among supporters.

\end{abstract}

%\vspace{1cm}

\noindent \textbf{Keywords}: social media, disinformation, persuasion, information design, elections.
%, polarisation.

\noindent \textbf{JEL codes}: D72, D83,  D91
\newpage

\newpage

\section{Introduction}

\begin{flushright}
``\emph{What we're facing is a new form of propaganda that\\ wasn't really possible until the digital age.}''
\end{flushright}
\begin{flushright}
Sean Illing, Vox
\end{flushright}

\begin{flushright}
``\emph{The real opposition
is the media. And the way to deal with \\ them is to flood the zone with shit.}''\\
\end{flushright}

\begin{flushright}
Attributed to Steve Bannon
\end{flushright}

\bigskip

% \begin{flushright}
% ``\emph{AI Sparks a Perfect Storm: Widespread Microtargeted \\ Disinformation Threatens Election Integrity.}''
% \end{flushright}
%

% \begin{flushright}
% ChatGPT summarizing this paper in one headline\end{flushright}
%
%
Voters receive a significant amount of information through social media platforms.
%One of the consequences of the rise of social media is that voters receive a significant amount of politically relevant information through social media platforms. 
This enables them to acquire information from many sources, but also gives malicious agents new tools to manipulate information.
One growing concern are the so-called troll farms -- groups of coordinated social media accounts that disseminate biased messages purporting to come from real individuals or reliable news sources.
%\footnote{The NATO defines a Troll Farm as ``\emph{an entity conducting disinformation propaganda activities on the Internet. This activity is often concealed under an inconspicuous name, e.g. public relations agency, Internet research centre, etc. The operations of troll factories are usually focused on the political or economic sphere.}'' See \url{https://www.nato.int/nato_static_fl2014/assets/pdf/2020/5/pdf/2005-deepportal2-troll-factories.pdf}}
Such troll farms are increasingly common: one report estimates that in 38 out of 65 surveyed countries, political leaders use them to manipulate elections and other domestic political events \citep{freedomhouse2019crisis}.
%\footnote{These countries include Iran, Philippines, Thailand, and others (\citealp{newrepublic,reuters:facebook}).} 
For example, the use of troll farms has been detected in the 2016 Brexit referendum and the 2016 US presidential election \citep{gorodnichenko2021social}, as well as in the online debate in China \citep{king2017chinese}.
%Trolls farms have also been used to manipulate elections abroad by governments such as Russia, China, Iran, and Saudi Arabia \citep{martin2019recent}. 
%Maybe rethink this, e.g. whether to lift second paragraph from lit review here

%Research has also established that disinformation about climate change is disseminated in an organized manner through financially potent networks. For example, researchers of the
%Institute for Global Sustainability of Boston University  revealed that Exxon Mobile funds more than 50 organizations that spread disinformation on climate change on Twitter.\footnote{See \cite{BU:Twitter}.}

Several features of troll farms make them more effective than more traditional propaganda tools such as biased media. First, unlike messages delivered via traditional channels, messages from troll farms shared via social media are able to emulate genuine information \citep{ferrara2016rise,haustein2016tweets}. Thus, the target audience is left uncertain whether a particular message comes from a real person communicating a piece of news or an experience, or from a troll. 

Second, social media platforms collect vast amounts of personal data, enabling them to infer users' political preferences. This makes it possible to target specific voters with specific messages designed to maximise persuasive effect.\footnote{For example, already in 2012 political campaigns could deliver specific ads to, for example, ``all registered 50-to-60-year-old male Democrats in Pennsylvania's 6th district who are frequent voters and care about the environment.'' \citep{Leber:2012}. 
%For a detailed discussion of the ability of troll farms to microtarget voters via social media, see \cite{pavlikova2021,o2021microtargeted}.
} The impact of this microtargeting capacity of social media influence operations
%exploit user characteristics in order to target them with custom-made political messages 
has been demonstrated in  2018, when the consulting firm Cambridge Analytica was found to have collected data on millions of Facebook users to determine their political preferences and target them with custom-made political messages \citep{NYT:Cambridge}. %There are other examples of troll farms microtargeting voters -- for instance, prior to the 2020 presidential election in the US, troll farms controlled all of the ten  most important Christian American Facebook pages as well as six of the ten most important African American Facebook pages
%\citep{Hao:2021}.
% https://www.nytimes.com/2018/04/04/us/politics/cambridge-analytica-scandal-fallout.html
The effectiveness of such persuasive microtargeting has  been documented by \cite{SimchonEtAl:2024}.

%With the help of advertising exchanges and media partners, a political campaign can use cookies to serve specific ads to, for example, all registered 50-to-60-year-old male Democrats in Pennsylvania’s 6th district who are frequent voters and care about the environment. 

Third, modern technology allows large numbers of trolls to be deployed at almost no cost, as multiple fake accounts can be controlled by a single user or even by automated algorithms \citep{guardian_sock_puppet_software,freedomhouse2017automated}.
%\footnote{Existing software allows a single user to manage multiple ``sock puppet'' accounts \citep{guardian_sock_puppet_software}. Furthermore, in at least 20 countries, fully automated bots appear to be used to manipulate online opinions \citep{freedomhouse2017automated}.} 
As a result, troll farms are producing a substantial proportion of content in online political discussions.\footnote{During the first impeachment of US president Donald Trump, an estimated 31\% of impeachment related tweets were coming from bots \citep{rossetti2023bots}. In the early weeks of the Russia-Ukraine war, bot accounts produced around 17\% of tweets related to the conflict \citep{shen2023examining}.}
%Zhao, Bei, et al. "Manufacturing conflict or advocating peace? A study of social bots agenda building in the Twitter discussion of the Russia-Ukraine war." Journal of Information Technology & Politics 21.2 (2024): 176-194.
The emergence of generative language models such as ChatGPT can make it particularly easy to flood social media platforms with content generated by fake accounts \citep{goldstein2023generative}, potentially drowning out other messages.

% could be more general
In this paper, we analyse the impact of troll farms on voting outcomes. We develop a constrained information design framework that reflects the three aforementioned features of troll farms. In our model, a continuum of voters choose between two actions. To fix ideas, we will say that voters choose whether to vote for or against the government. There is a binary state of the world. Voters share a common prior about the state, and each voter independently receives an imperfect continuous signal about it. Moreover, voters differ in their political preferences, or \emph{types}. 
%The type of each voter corresponds to the minimum probability that the voter needs to put on the high state to be willing to vote for the government. 

A sender wants to increase the share of voters voting for the government. She can do it by garbling the information that voters receive.
%She cannot commit to an arbitrary experiment that sends messages to voters conditional on the state. 
%At the same time, her strategy space reflects the features specific to persuasion via troll farms. 
Specifically, the sender can send uninformative messages that mimic exogenous informative signals. Each voter receives a message, and does not  know whether it is an informative signal or comes from the troll farm. These messages can be targeted at specific voters. For each voter type, the sender chooses the \emph{intensity} of trolls' activity and its \emph{slant} -- that is, the probability that the voter  receives a message from the troll farm instead of a genuine informative signal, and the distribution of the trolls' messages.

We then characterise the optimal strategy of the sender.
To voters who prefer to vote against the government at the prior belief, the sender sends pro-government messages to increase the probability that such voters receive them. On the other hand, voters that prefer to vote in favour of the government at the prior belief are targeted with anti-government messages to make them less credible.

The result that the sender sends both pro-government and anti-government messages may appear counterintuitive. It is, however, in line with some of the empirical literature on troll farms, which concludes that troll farms operated by the same sender send messages of opposite political leanings \citep{linvill2020troll}. One common interpretation for this observation is that troll farms aim to create conflict rather than persuade voters to support a particular side of the political spectrum \citep{simchon2022troll,linvill2022talking}. Our result suggests that the persuasion motive is also consistent with this empirical pattern.

We then derive several results related to the ability of the sender to manipulate electoral outcomes. First, we show how precision of voters' independent signals affects electoral outcomes and the sender's payoffs. In the absence of trolls, making voters' independent signals more informative increases the share of voters voting for the government in the high state, and reduces it in the low state. However, the presence of trolls changes this picture: Theorem \ref{prop: informativeness_Blackwell} shows that higher precision of voters' independent signals   increases the share of the voters who back the government in \emph{both} states. Thus, societies in which voters have access to high-quality information online are more vulnerable to troll farms. At the same time, we show that even though troll farms garble information, their presence can help move the election towards the efficient outcome when signals are moderately informative.

Theorem \ref{prop: informativeness_Blackwell}  contrasts with the literature on information design without source uncertainty, in which more precise independent signals make it harder for the sender to manipulate  beliefs \citep{BergemannMorris:2016,dentersocial,gradwohl2022social}. The reason is that in that literature, voters can distinguish between the sender's messages and independent signals. Hence, the sender is constrained to information structures that are more informative than those induced by independent signals. Increased informativeness of independent signals tightens this constraint. In our model, on the other hand, the signal of the sender mimics voters' private signals rather than complementing them. Hence, the sender faces the opposite constraint: she is restricted to information structures that are less precise than those induced by independent signals. Making exogenous signals more precise relaxes this constraint, increasing the sender's ability to persuade voters.

Second, we analyse how increased polarisation of the electorate affects the power of the troll farm. While there is considerable discussion of potential negative effects of increased polarisation \citep{mccoy2018polarization,martherus2021party}, we show in Theorem \ref{prop: polarisation} that increased polarisation limits the ability of troll farms to manipulate electoral outcomes. We also show that deviations from Bayesian inference that cause voters to underreact to information have a similar effect. Intuitively, polarisation of voters' preferences means that voters tend to require stronger signals to be persuaded, which has the same aggregate effect as making individual signals less precise. 

Third, we show that when it is costly to send trolls' messages to voters, the sender adjusts the strategy differently depending on the voters' types. For pro-government voters, the sender continues to target all voters but adjusts the slant, ceasing to send extreme anti-government messages. For anti-government voters, the sender keeps the slant unchanged but ceases to target moderates.

\paragraph{Related literature.}

%A number of papers have adapted the Bayesian persuasion \citep{KamenicaGentzkow:2011} to political contexts (see \citealp{alonso2016persuading,kolotilin2022censorship,sun2022public,mylovanov2023constructive}). 
%wang2015bayesian, bardhi2018modes,
The paper adds to the literature on constrained information design. While the basic Bayesian persuasion framework \citep{KamenicaGentzkow:2011} allows the sender to commit to any Bayes-plausible information structure, a number of papers have examined settings in which the sender faces constraints on the set of information structures available to her.\footnote{These constraints can come from information being costly for the sender to acquire \citep{di2021strategic}, from receiver's ability to detect lies \citep{ederer2022bayesian}, from communication being  noisy \citep{le2019persuasion}, from  gradual communication \citep{escude2023slow}, from the sender being  restricted to monotone partitions of the state space \citep{onuchic2023conveying}, or to censorship strategies \citep{ginzburg2019optimal}.}
%\footnote{For further discussion of constrained information design, see \cite{doval2024constrained}.}
%Keep Onucic and Ray, Escude, Ginzburg
The closest papers in this literature are those that study persuasion of receivers who, as in our paper, observe independent private signals in addition to a signal from the sender \citep{BergemannMorris:2016,dentersocial,Matyskova:2018,heese2025persuasion,gradwohl2022social}. In contrast, in our paper the sender imitates independent signals rather than complementing them, which underlies Theorems \ref{prop: informativeness_Blackwell} and \ref{prop: polarisation}.

Our paper also contributes to the literature that  models propaganda and censorship   as information design problems. 
%\footnote{For a broader overview of information manipulation in dictatorships, see \cite{guriev2019informational}.} 
\cite{gehlbach2014government} and \cite{gitmez2023informational} analyse Bayesian persuasion of citizens by an autocrat. \cite{shadmehr2015state}, \cite{boleslavsky2021media} model rulers who choose the extent to which they restrict the ability of media to transmit information. In \cite{kolotilin2022censorship}, a government faces media outlets with different reporting strategies, and chooses which ones to suppress.
\cite{gitmez2023dictator} study a dictator who chooses a persuasion scheme together with a level of repression.  Our paper adds to this literature by analysing targeted political persuasion with source uncertainty.

At the same time, our results  contribute  to the growing literature on disinformation on social media. A number of papers have studied the spread of misinformation online (see \citealp{zhuravskaya2020political} for an overview). 
%, as well as \citealp{del2016spreading,allcott2017social,vosoughi2018spread,bursztyn2020misinformation,guriev2023curtailing}). 
In particular, prior research has documented extensive use of troll farms in the 2016 Brexit referendum and the 2016 US presidential election \citep{gorodnichenko2021social}; as well as in the online debate in China \citep{king2017chinese}. Recent theoretical models of online misinformation include \cite{foerster2023theory}, \cite{acemoglu2024model}, \cite{aymanns2025fake}, and \cite{SisakDenter:2025}.
%A number of theoretical papers have modelled decisions by social media users to share potentially misleading information (see e.g. \citealp{acemoglu2024model,SisakDenter:2025}). 
Our paper complements this  literature by analysing a sender that uses trolls to manipulate electoral outcomes.

Another literature models political persuasion as signal jamming \citep{edmond2013information,caselli2014incumbency,akoz2016information,edmond2021creating}. These models share our premise that a sender distorts an exogenous signal to manipulate receivers, but differ from our framework in two important respects. First, the sender can shift only the mean of the signal distribution, whereas our framework allows manipulation of the entire distribution, enabling a richer characterisation of optimal targeting strategies. Second, unlike in signal jamming models,  we adopt an information design approach in which  the sender commits to a strategy. This  generates the key implication that the sender does not wish to fully replace informative signals even when doing so is costless.

%Furthermore, in signal jamming models receivers do not observe the sender's choice of distortion effort. Hence, if the cost of distorting the signal goes to zero, the sender would select the maximum possible distortion. In our model, on the other hand, voters are aware of the strategy chosen by the sender, which makes it optimal to not fully jam the exogenous signal even when it is costless to do so.

%via troll farms, with its aforementioned features -- source uncertainty, targeting of specific voters, and potential for a vast amount of messages.

The paper also contributes to the  literature studying microtargeting in political persuasion. In the last few years, evidence has been mounting that automated microtargeting is very precise \citep{YouyouEtAl:2016} and highly effective \citep{ZaroualiEtAl:2022,TappinEtAl:2023}. \cite{SimchonEtAl:2024} show how generative artificial intelligence tools such as ChatGPT or Microsoft Copilot allow precise and effective microtargeting on a large scale at low cost. Theoretical papers studying microtargeting of voters include \cite{SchipperWoo:2019} and \cite{vanGils2025microtargeting}, which show the effects of microtargeting on voter welfare.
%Within the theoretical literature studying microtargeting of voters, \cite{SchipperWoo:2019} show that when policy platforms are not ex ante known by voters, microtargeting can improve welfare by allowing voters to learn them. \cite{vanGilsMullerPrufer:2024} show that   microtargeting negatively impacts voter welfare. 
A number of papers also analyse senders targeting  voters in a pivotal-voter framework \citep{bardhi2018modes,chan2019pivotal}.

The rest of the paper is structured as follows. Section \ref{sec:model} introduces the model and discusses its main assumptions. Section \ref{sec:optimal_strategy} characterises the optimal strategy of the sender and the electoral outcomes with trolls. Section \ref{sec:main results} introduces the main results of the paper related to the effects of signal informativeness and electoral polarisation on the outcomes. Section \ref{sec:extensions} explores several extensions to the main model. Finally, Section \ref{sec:discussion} concludes. All proofs are in the Appendix.

\section{Model}\label{sec:model}

A continuum of voters of mass one need to choose whether to re-elect the government. There is an unknown state of the world $\theta\in\left\{ 0,1\right\} $, which indicates, for example, whether the government is competent.
The preferences of each voter are characterised by a type $x\in\left[0,1\right]$. Types are distributed across voters according to a CDF $H\left(x\right)$ with the associated density $h\left(x\right)$ and with full support.
If a voter with type  $x$ votes for the government, she receives a payoff $1-x$ if the state turns out to be 1, that is, if the government is competent, and a payoff $-x$ if the state turns out to be 0. The payoff of a voter who votes against the government is normalised to zero.
%\footnote{Thus, voters receive payoffs from their actions, and not from the outcome of the election. Since the set of voters is a continuum, each voter is pivotal with probability zero. Hence, allowing voters' payoffs to also depend on the voting outcome has no effect on their behaviour at an equilibrium.} 
Thus, a voter with a higher type is more opposed to the government. We assume that a voter who is indifferent votes for the government. The government is re-elected if the share of voters who vote for it is at least $\frac{1}{2}$.

Voters share a common prior belief about the state of the world being 1. We normalise that belief to $\frac{1}{2}$. Note that this is without loss of generality: as we show below, the voter votes for the government if and only if her posterior belief is at least $x$. Hence, changing the prior belief is equivalent to changing the distribution $H$. 

At the beginning of the game, each voter $i$ receives a private signal $s_{i} \in \mathbb{R}$. In each state $\theta\in\left\{ 0,1\right\} $, the signal of each voter is drawn from a CDF $F_{\theta}$ with density $f_{\theta}$ and full support on $\mathbb{R}$. We assume, without loss of generality, that $f_{0}\left(0\right)=f_{1}\left(0\right)$. Let $m\left(s\right):=\frac{f_{1}\left(s\right)}{f_{0}\left(s\right)}$ denote the likelihood ratio. We will assume that $m\left(s\right)$ is strictly increasing in $s$. This imposes a standard monotone likelihood ratio assumption, implying that a higher realisation of the signal makes a voter believe that the state equals one with a higher probability. It also implies that $F_{0}\left(s\right)>F_{1}\left(s\right)$ for all $s$.
We will refer to the pair $\left(F_{0},F_{1}\right)$ as the \emph{information structure}.

A player, whom we will call the sender, is trying to help the government. She can do it by setting up a troll farm, that is, by sending messages that imitate and replace the informative signals but are not correlated with the true state. Hence, the sender cannot supply additional information to voters, and is constrained to garblings of their exogenous signals. For each type of voter, the sender can select the \textit{intensity} with which she is targeted by trolls, as well as the  \textit{slant} of the trolls' messages. Formally, for each type of voter $x$, intensity equals  the probability $\alpha_{x}$ with which the voter observes a signal from trolls instead of an informative signal. For example, $\alpha_{x}=0$ means that no trolls are targeting voters with type $x$, and thus all signals observed by these voters are coming from informative sources. On the other hand, $\alpha_{x}\rightarrow1$ means that a signal is almost surely coming from a troll. Slant is represented by a distribution $\tilde{F}_{x}$ of the trolls' signals, with the associated density $\tilde{f}_{x}$.
Informally, trolls have a more pro-government slant if $\tilde{F}_{x}$ is such that messages tend to have high realisations.

%For example, trolls have a pro-government slant if $\tilde{F}_{x}$ places greater mass on high realisations of $s$, and anti-government slant if messages tend to have low realisations.

As will be shown later, there may be several strategies that result in the same share of votes for the government. As an equilibrium selection rule, we will assume that, all other things being equal, the sender prefers the smallest intensity.
%\footnote{This may be because, for example, operating trolls incurs some cost, which is infinitesimally small compared to any change in the vote share of the government.} 
Formally, suppose that choosing
$\left(\alpha_{x},\tilde{F}_{x}\right)_{x \in \mathbb{R}}$ and $\left(\alpha_{x}^{\prime},\tilde{F}_{x}^{\prime}\right)_{x \in \mathbb{R}}$ produce  the same vote share for the government in both states, but $\alpha_{x}\leq\alpha_{x}^{\prime}$ for \emph{all} $x$, with strict inequality for some $x$. Then the sender prefers $\left(\alpha_{x},\tilde{F}_{x}\right)_{x \in \mathbb{R}}$ to $\left(\alpha_{x}^{\prime},\tilde{F}_{x}^{\prime}\right)_{x \in \mathbb{R}}$.

The timing of the game is as follows. First, for each $x$, the sender selects intensity $\alpha_{x}$ and slant $\tilde{F}_{x}$. Then, nature draws the state $\theta$ and the signals. Each voter then receives either a signal from the troll farm or an informative signal, without being able to distinguish between the two. 
%With probability $\alpha_{x}$, a given voter with type $x$ observes a signal from a troll drawn from the CDF $\tilde{F}_{x}$. With probability $1-\alpha_{x}$ she observes a signal drawn from the CDF $F_{\theta}$. 
Voters then form their posterior beliefs and vote. Afterwards, payoffs are realised. The payoff of each voter is as described above, while the payoff of the sender is $u\left(V\right)$, where $V$ is the share of voters that vote for the government, and $u$ is a strictly increasing function.
\paragraph{Discussion of assumptions.}

The model introduced above incorporates the three features of troll farms discussed in the introduction. First,  a voter who observes a signal is unable to distinguish between a message from a troll and a genuine message. This reflects the fact that trolls mimic informative signals. 
Second, the sender is able to microtarget voters by selecting  a different intensity and slant for each type of voter. This represents the ability of troll farms to use information about voters' political preferences that social media platforms are able to extract. 
Third, because trolls can produce a large number of signals cheaply, increasing intensity is costless for the sender in the model. In Section \ref{subsec:limited reach} we  discuss how the results of the model change when sending trolls' messages is costly.

When several strategies give the sender the same payoff, the sender selects the one with the lowest intensity. Section \ref{subsec:limited reach} micro-founds this assumption, showing that this equilibrium selection rule emerges as a limit case when the cost of trolls is positive but goes to zero. At the same time, as subsequent results show, this assumption is important for describing the optimal strategy of the sender, but not for the comparative statics captured in Theorems \ref{prop: informativeness_Blackwell} and \ref{prop: polarisation}.

Our model describes a setting in which voters are aware of the existence of the troll farm, and of its chosen intensity or slant. In Section \ref{sec:naive} we consider a case in which a subset of voters are naive and do not realise that a troll farm is operating.

The model also assumes that the sender is able to replace all information that a voter can acquire. In reality, some information is acquired through channels other than social media, and hence is beyond the sender's control. In Section \ref{sec:Non-Replaceable}, we show that the main results of the paper remain unchanged if some information comes from sources that cannot be emulated by trolls.

By replacing informative signals with uninformative ones, the sender is able to garble the information that voters receive. Hence, she is able to induce distributions of posterior beliefs that are less accurate than those induced by these independent signals. However,  she cannot send informative signals of her own, and hence cannot induce distributions of posterior beliefs that are more precise. If she could both mimic exogenous signals and commit to an arbitrary informative experiment,  she could induce any  Bayes-plausible distribution of posteriors by setting $\alpha_{x}=1$ and fully replacing the exogenous signals. Thus, it would be optimal to replace all independent signals, and the model would become a standard model of Bayesian persuasion.

Another feature of our model is that the sender cannot observe the state of the world. Suppose instead that she is informed about the state. 
In this case she can condition $\left(\alpha_{x},\tilde{F}_{x}\right)$  for each voter type on it. Then $\left(\alpha_{x},\tilde{F}_{x}\right)$ serves as a signal about the state, and the setting becomes that of a cheap talk game. As usual, there exists a babbling equilibrium, in which the sender selects the same  $\left(\alpha_{x},\tilde{F}_{x}\right)$ in each state. This equilibrium is identical to ours.
On the other hand, a separating equilibrium cannot exist because if the sender's choice reveals information about the state, she will deviate in state 0. Because only a babbling equilibrium exists, allowing the sender to observe the state does not change the outcome of the game.

We also assume that voters aim to match their actions to the state, but do not receive a payoff from the outcome of the vote. The fact that voters care about their actions may represent, for example, expressive motivations for voting, for which there is empirical evidence \citep{feddersen2009moral,pons2018expressive,ginzburg2022counting}. Since the set of voters is a continuum, a given voter is pivotal with probability zero, so making the voters' utility also dependent on the outcome of the vote does not change their equilibrium actions, as long as each voter puts a positive weight on her action.

%- check all footnotes

\section{Optimal Intensity and Slant}\label{sec:optimal_strategy}
\subsection{A Benchmark without Trolls}\label{sec:notrolls}

Consider first what happens when trolls are not available. Take a voter with type $x$. Let

\[
\pi(s)=\frac{f_{1}\left(s\right)}{f_{1}\left(s\right)+f_{0}\left(s\right)}=\frac{m\left(s\right)}{m\left(s\right)+1}
\]
be the probability that a voter assigns to the government being competent when she observes signal $s$. Her expected payoff is $\pi\left(s\right)-x$ if  she votes for the government, and zero otherwise. Hence, she votes for the government if and only if $\pi\left(s\right)\geq x$, or, equivalently, if and only if $s\geq s^{*}\left(x\right)$, where $s^{*}\left(x\right):=m^{-1}\left(\frac{x}{1-x}\right)$. Therefore, in each state $\theta \in \left\{0,1\right\}$, a voter with type $x$ votes for the government with the ex ante probability $1-F_{\theta}\left[s^{*}\left(x\right)\right]$. This probability depends on the information structure, which also affects the cutoff $s^{*}\left(x\right)$ through the likelihood ratio.

We say that signals are more informative if each voter is more likely to make the correct decision in each state. 
Formally, we define greater informativeness as follows:

%Intuitively, we can say that signals are more informative for a voter with type $x$ if she is more likely to make the correct decision -- that is, vote for the government in state 1, and against the government in state 0. We can define a more informative information structure as one in which every voter is more likely to make the correct decision in both states. This is equivalent to informativeness in the Blackwell sense. Formally, we can define informativeness as follows:

\begin{defn}\label{def: informativeness}
Information structure $\left(\hat{F}_{0},\hat{F}_{1}\right)$ is more informative than information structure $\left(F_{0},F_{1}\right)$ if and only if for all $x$ we have
%\begin{itemize}
%\item 
$\hat{F}_{0}\left[\hat{s}^{*}\left(x\right)\right]\geq F_{0}\left[s^{*}\left(x\right)\right]$ and
%\item 
$\hat{F}_{1}\left[\hat{s}^{*}\left(x\right)\right]\leq F_{1}\left[s^{*}\left(x\right)\right]$.
%\end{itemize}
%where $s^{*}\left(x\right)$ and $\hat{s}^{*}\left(x\right)$ are defined as above.
%in \eqref{eq:signal cutoff}.
\end{defn}

%Recall that a voter receives a payoff of $-x$ if she votes for the government in state 0, and a payoff of $x$ if she votes for the government in state 1. Then her expected utility in state 0 equals $-x\left(1-F_{0}\left[s^{*}\left(x\right)\right]\right)$, while in state 1 it equals $\left(1-x\right)\left(1-F_{1}\left[s^{*}\left(x\right)\right]\right)$. Then, under our definition, an information structure is more informative if and only if the utility of each voter in each state is higher. Thus, an information structure that is more informative under our definition is also more informative under the Blackwell information criterion.

Note that if $\left({F}_{0},{F}_{1}\right)$ is less informative than $\left(\hat{F}_{0},\hat{F}_{1}\right)$ according to this definition, it is also a garbling of $\left(\hat{F}_{0},\hat{F}_{1}\right)$, and less informative in the Blackwell sense.

Given the distribution of signals in each state, the share of voters voting for the government in state $\theta\in\left\{ 0,1\right\} $ when there are no trolls equals
\[
V^{NT}_{\theta}=\int_{0}^{1}\left(1-F_{\theta}\left[s^{*}\left(x\right)\right]\right)dH\left(x\right).
\]
%\noindent
Hence, increased informativeness decreases the sender's payoff in state $0$ and increases it in state $1$. 
%The election aggregates information if and only if $V_{0}<\frac{1}{2}\leq V_{1}$. Therefore, increased informativeness moves the outcome of the election towards information aggregation.

% , that is, if and only if

% \[
% \int_{0}^{1}F_{1}\left[s^{*}\left(x\right)\right]dH\left(x\right)\leq H\left(1\right)-\frac{1}{2}<\int_{0}^{1}F_{0}\left[s^{*}\left(x\right)\right]dH\left(x\right).
% \]

\subsection{Equilibrium with Trolls}\label{sec:equilibrium}

Now consider the case when trolls are available. Take a voter with type $x$. If the sender  chooses $\alpha_{x}$ and $\tilde{F}_{x}$, this voter forms the following posterior belief after observing signal $s$:
\[
\pi_{x}\left(s\right)=\frac{\left(1-\alpha_{x}\right)f_{1}\left(s\right)+\alpha_{x}\tilde{f}_{x}\left(s\right)}{\left(1-\alpha_{x}\right)f_{1}\left(s\right)+\alpha_{x}\tilde{f}_{x}\left(s\right)+\left(1-\alpha_{x}\right)f_{0}\left(s\right)+\alpha_{x}\tilde{f}_{x}\left(s\right)}.
\]
The voter  votes for the government whenever $\pi_{x}\left(s\right)\geq x$. For all $s$, a larger $\alpha_{x}$ or a larger $\tilde{f}_{x}\left(s\right)$ 
moves the belief towards the prior. 

Recall that without trolls,  a voter will vote for the government if and only if she receives signal $s\geq s^{*}\left(x\right)$. Consider first a voter with type $x\leq\frac{1}{2}$. When that voter's belief equals the prior, she is willing to vote for the government. By instructing trolls to send her signals below $s^{*}\left(x\right)$, the sender can weaken these signal realisations. If intensity is high enough,
the belief after each signal $s< s^{*}\left(x\right)$ will be weakly above $x$. At the same time, if trolls do not send signals $s\geq s^{*}\left(x\right)$, the belief after such signals will also be above $x$.
Hence, the sender is able to ensure that the voter votes for the government after \textit{any} signal realisation. To do so with the smallest intensity, the sender sets
$\tilde{f}_x\left(s\right)=0$ for all $s\geq s^{*}\left(x\right)$, while for all $s< s^{*}\left(x\right)$ she sets $\tilde{f}_x\left(s\right)$ in such a way that the belief after any such signal is exactly $x$. %This gives a unique optimal pair $\left(\alpha_{x},\tilde{F}_{x}\right)$ for each voter type $x\leq\frac{1}{2}$.

%For each voter type $x\leq\frac{1}{2}$, this gives a unique optimal pair $\left(\alpha_{x},\tilde{F}_{x}\right)$ that ensures that minimises the mass of trolls while ensuring that the voter always votes for the government.

%By setting $\alpha_{x}=1$ the sender ensures that her posterior belief $\pi\left(s\right)$ equals the prior regardless of the state. Hence, this voter will vote for the government in either state with probability one. However, it is possible to achieve this outcome with fewer trolls.

%Intuitively, when the voter is ex ante willing to vote for the government, the sender can ensure that she does so ex post by ``flooding the zone'' -- overloading the environment with trolls' messages to such an extent that they drown out informative messages and prevent the voter from learning any information.

Now consider a voter with type $x>\frac{1}{2}$. This voter is ex ante opposed to the government. Since $s<s^{*}\left(x\right)$ cannot persuade this voter to vote for the government even without trolls, it cannot make her willing to vote for the government with trolls, either. It is then optimal for the sender to minimise the probability that these voters receive such a signal. Thus, the sender sets $\tilde{f}_x\left(s\right)=0$ for all $s<s^{*}\left(x\right)$. 
On the other hand, having trolls send messages $s\geq s^{*}\left(x\right)$ increases the probability that the voter receives signals that, in the absence of trolls, induce her to vote for the government. However, doing this reduces the posterior beliefs after such signals. Hence, it is optimal for the sender to set $\alpha_{x}$ and $\tilde{f}_x\left(s\right)$ such that the posterior belief after observing signal $s\geq s^{*}\left(x\right)$ equals exactly $x$. 
%This requirement, together with the requirement that $\tilde{f}\left(s\right)$ remains a density, implies that there is a unique  optimal pair $\left(\alpha_{x},\tilde{F}_{x}\right)$ that maximises the vote share of the government in both states.

%On the other hand, a signal $s\geq s^{*}\left(x\right)$ will, in each state, induce a posterior belief greater than $x$ if there are no trolls, that is, if $\alpha_{x}=0$. Increasing  $\alpha_{x}$ allows the sender to increase the probability that the voter receives a signal greater than $s^{*}\left(x\right)$, but reduces the posterior belief induced by that signal. Hence, it is optimal for the sender to set $\alpha_{x}$ and $\tilde{f}\left(s\right)$ such that the posterior belief after observing signal $s\geq s^{*}\left(x\right)$ equals exactly $x$. This requirement, together with the requirement that $\tilde{f}\left(s\right)$ remains a density implies that there is a unique  optimal pair $\left(\alpha_{x},\tilde{F}_{x}\right)$ that maximises the vote share of the government in both states.

The next proposition formalises this intuition:

\begin{prop}
\label{lem: strategy}
The support of trolls' messages is $\left(-\infty,s^*\left(x\right)\right]$ for all $x\leq\frac{1}{2}$, and $\left[s^*\left(x\right),\infty\right)$ for all $x>\frac{1}{2}$. The belief of every voter after receiving a message from the support of troll's messages is  $\pi_x\left(s\right)=x$. Furthermore, $V_{1}>V_{0}\geq H \left(\frac{1}{2}\right)$, and $V_{\theta}>V^{NT}_{\theta}$ for each $\theta$, where $V_{\theta}$ is the equilibrium share of voters voting for the government in state $\theta$. 

%Let $\kappa\left(s\right):=\frac{xF_{0}\left(s\right)-\left(1-x\right)F_{1}\left(s\right)}{1-2x}$. The sender has a unique optimal strategy.
%For all $x\leq\frac{1}{2}$, the intensity is
%$\alpha_{x}=\frac{\kappa\left[s^{*}\left(x\right)\right]}{\kappa\left[s^{*}\left(x\right)\right]+1},$
%and the slant is

% \[
% \tilde{f}_{x}\left(s\right)=\begin{cases}
% \frac{\kappa^{\prime}\left(s\right)}{\kappa\left(s\right)} & \text{if }s<s^{*}\left(x\right)\\
% 0 & \text{if }s\geq s^{*}\left(x\right)
% \end{cases},
% \]

% \[
% \tilde{f}_{x}\left(s\right)=\begin{cases}
% \frac{1-\alpha_{x}}{\alpha_{x}}\kappa^{\prime}\left(s\right) & \text{if }s<s^{*}\left(x\right)\\
% 0 & \text{if }s\geq s^{*}\left(x\right)
% \end{cases},
% \]
% such that $\pi\left(s\right)=x$ for all $s\leq s^{*}\left(x\right)$.
% For all $x>\frac{1}{2}$, the intensity is $\alpha_{x}=\frac{\kappa\left[s^{*}\left(x\right)\right]+1}{\kappa\left[s^{*}\left(x\right)\right]},$
% and the slant is 

% \[
% \tilde{f}_{x}\left(s\right)=\begin{cases}
% 0 & \text{if }s<s^{*}\left(x\right)\\
% \frac{\kappa^{\prime}\left(s\right)}{\kappa\left(s\right)-1} & \text{if }s\geq s^{*}\left(x\right)
% \end{cases},
% \]

% \[
% \tilde{f}_{x}\left(s\right)=\begin{cases}
% 0 & \text{if }s<s^{*}\left(x\right)\\
% \frac{1-\alpha_{x}}{\alpha_{x}}\kappa^{\prime}\left(s\right) & \text{if }s\geq s^{*}\left(x\right)
% \end{cases},
% \]
% such that $\pi\left(s\right)=x$ for all $s\geq s^{*}\left(x\right)$.
% Furthermore, $V_{1}>V_{0}\geq H \left(\frac{1}{2}\right)$, where $V_{\theta}$ is the equilibrium share of voters voting for the government in state $\theta$.
\end{prop}

The sender thus chooses different communication strategies for voters who initially support the government and for those who initially oppose it. Pro-government voters are targeted with anti-government messages to make unfavourable information unreliable. Anti-government voters are targeted with pro-government messages to increase the probability of them receiving favourable information.
Pro-government voters always vote for the government, while anti-government voters vote for the government if and only if they receive a signal $s\geq s^{*}\left(x\right)$. Because signals are informative, the share of voters voting for the government is greater in state 1. Furthermore, being able to use trolls increases the sender's payoff in each state.

The result that the sender sends anti-government messages to pro-government voters may appear counterintuitive. In Section \ref{sec:naive}, we show that this may change when sufficiently many voters are naive, that is, unaware of the presence of trolls.

%If sufficiently many voters are unaware of the existence of trolls, the optimal strategy for the sender is to send pro-government messages to all voters.

When it comes to the intensity, the sender places the most effort on persuading neutral voters, that is, voters whose types are close to $\ot$:
%Not sure we need to emphathise this

\begin{coroll}
\label{cor:dadx}
Optimal intensity $\alpha_x^*$  increases in $x$ if $x<\ot$ and decreases in $x$ if $x>\ot$. 
%Moreover, $\left.\alpha_{x}^*\right|_{x=\ot}=1$.
\end{coroll}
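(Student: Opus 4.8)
The plan is to reduce the single-peakedness of $\alpha_x^*$ to a single monotonicity statement about the auxiliary quantity $g(x):=\kappa_x[s^{*}(x)]$, and then to establish that monotonicity by a short differentiation in which the definition of $s^{*}(x)$ forces a convenient cancellation.

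First I would rewrite the two formulas of Proposition~\ref{lem: strategy} purely in terms of $g$. For $x<\ot$ one has $\alpha_x^{*}=g/(g+1)$, with $d\alpha_x^{*}/dg=1/(g+1)^{2}>0$; for $x>\ot$ one has $\alpha_x^{*}=(g+1)/g=1+1/g$, with $d\alpha_x^{*}/dg=-1/g^{2}<0$. Tracking $\kappa_x(\cdot)$ between its limits $\kappa_x(-\infty)=0$ and $\kappa_x(+\infty)=-1$, and using that $F_{0}>F_{1}$ makes $\kappa_x$ monotone on the relevant half-line, one checks $g>0$ on $(0,\ot)$ and $g<-1$ on $(\ot,1)$, so both expressions lie in $(0,1)$ and stay away from the poles $g=0,-1$; moreover the numerator of $g$ tends to $\ot\,[F_{0}(0)-F_{1}(0)]>0$ as $x\to\ot$ while its denominator $1-2x\to0$, so $g\to+\infty$ from the left and $-\infty$ from the right and $\alpha_x^{*}\to1$ on both sides, confirming that the peak sits at $\ot$. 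Given the signs of $d\alpha_x^{*}/dg$, it then suffices to show that $g$ is strictly increasing on each of $(0,\ot)$ and $(\ot,1)$: on the left this makes $\alpha_x^{*}$ increasing, on the right decreasing, which is exactly the claim.

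The core step is to differentiate $g(x)=N(x)/(1-2x)$ with $N(x):=xF_{0}[s^{*}(x)]-(1-x)F_{1}[s^{*}(x)]$. By the quotient rule, $\mathrm{sign}\,g'(x)=\mathrm{sign}\bigl(N'(x)(1-2x)+2N(x)\bigr)$. The observation I expect to be the crux — really the only nontrivial point — is that, on differentiating $N$, all terms carrying $ds^{*}/dx$ combine into the factor $xf_{0}[s^{*}(x)]-(1-x)f_{1}[s^{*}(x)]$, which vanishes identically because $s^{*}(x)=m^{-1}\!\bigl(\tfrac{x}{1-x}\bigr)$ is defined precisely so that $f_{1}[s^{*}(x)]/f_{0}[s^{*}(x)]=x/(1-x)$. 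This envelope-type cancellation leaves simply $N'(x)=F_{0}[s^{*}(x)]+F_{1}[s^{*}(x)]$, with no derivative of $s^{*}$ surviving. Substituting and expanding, the expression $N'(x)(1-2x)+2N(x)$ collapses to $F_{0}[s^{*}(x)]-F_{1}[s^{*}(x)]$, which is strictly positive by the monotone-likelihood-ratio assumption $F_{0}>F_{1}$. Hence $g'(x)>0$ on each interval, and the reduction above delivers the corollary. The only regularity required is differentiability of $m$ (hence of $s^{*}$) and of the $F_{\theta}$, which holds by assumption, so the quotient-rule computation is legitimate.
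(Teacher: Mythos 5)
Your proof is correct and is essentially the paper's own argument repackaged: routing the differentiation through $g(x)=\kappa_x\left[s^{*}(x)\right]$ and the chain rule reproduces exactly the paper's numerator, resting on the same two key facts — the envelope-type cancellation $xf_{0}\left[s^{*}(x)\right]-(1-x)f_{1}\left[s^{*}(x)\right]=0$ forced by the definition of $s^{*}(x)$, and the sign $F_{0}\left[s^{*}(x)\right]-F_{1}\left[s^{*}(x)\right]>0$ from the MLRP. The additional bookkeeping on the range and limits of $g$ (that $g>0$ on $\left(0,\frac{1}{2}\right)$, $g<-1$ on $\left(\frac{1}{2},1\right)$, and $\alpha_x^{*}\to 1$ at $\frac{1}{2}$) is correct but not needed for the statement.
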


\section{Informativeness, Polarisation, and Voting Outcomes}\label{sec:main results}

In this section we analyse the effects of different features of the voting process on election outcomes.
We start by looking at the effects of informativeness. Recall from Section \ref{sec:notrolls} that without interference from the sender, under a more informative information structure, the government receives more votes in state 1, and fewer votes in state 0. With trolls, however, the dynamic is different, as the next result shows:

\begin{theorem}
\label{prop: informativeness_Blackwell}If information structure $\left(\hat{F}_{0},\hat{F}_{1}\right)$ is more informative than information structure $\left(F_{0},F_{1}\right)$, then in each state the payoff of the sender is higher under $\left(\hat{F}_{0},\hat{F}_{1}\right)$ than under  $\left(F_{0},F_{1}\right)$.
\end{theorem}

Hence, a more informative information structure helps the sender manipulate the voters' decisions, increasing the probability that a voter votes incorrectly in state $0$.
Intuitively, the sender cannot send informative signals of her own but can only replace voters' informative signals with uninformative ones. Hence, the beliefs she can induce in voters are constrained to be closer to the prior than those induced by informative signals. When the latter become more informative, the sender's constraint is relaxed. Hence, the set of beliefs she can induce expands, increasing her payoff.

%In our model, on the other hand, the signal of the sender does not complement voters' private signals, but replaces them in such a way that the voter is uncertain about the source of the signal she receives. 
%Hence, the sender faces the opposite constraint: she is restricted to information structures that are less precise than those induced by exogenous signals. This implies the result that making exogenous signals more precise relaxes this constraint, increasing the sender's ability to persuade voters.

Next, consider  effect of the distribution $H$ of voters' types. Recall that a voter with type $\frac{1}{2}$ is indifferent between supporting and opposing the government at the prior belief. More generally, voters whose types are closer to $\frac{1}{2}$ need a weaker signal to be convinced to shift their vote. Thus, for any two voters  whose types are on the same side of  $\frac{1}{2}$, the voter whose type is closer to  $\frac{1}{2}$ is more moderate. We can then say that the electorate becomes more polarised when the fraction of moderates decreases and the fraction of extreme voters increases. More formally, we use the following definition of polarisation:

\begin{defn}
\label{def: polarisation}
A distribution $\hat{H}$ dominates distribution $H$ in polarisation if and only if
%\begin{itemize}
%\item 
$\hat{H}\left(x\right)\geq H\left(x\right)$ for all $x\leq\frac{1}{2}$, and
%\item 
$\hat{H}\left(x\right)\leq H\left(x\right)$ for all $x\geq\frac{1}{2}$.
%\end{itemize}
\end{defn}

This defines a partial order on the set of distributions. Intuitively, Definition \ref{def: polarisation} says that polarisation is higher if, for any type $x$, there are more voters who are on the same side of $\frac{1}{2}$ as $x$, but more extreme. This definition of polarisation is a special case of the definition in \cite{ginzburg2024comparing}. The next result shows that increased polarisation makes the sender worse off:

%To see the intuition behind this definition, recall that a voter whose type $x$ equals $\frac{1}{2}$ is indifferent between supporting and opposing the government at the prior belief. More generally, voters whose types are close to $\frac{1}{2}$ can be convinced to shift their vote by a relatively weak signal, while voters whose types are far from $\frac{1}{2}$ need a strong signal to be convinced to change her vote. We can then say that for any two voters who are on the same side of the political divide (that is, whose types are on the same side of  $\frac{1}{2}$), the voter whose type is closer to  $\frac{1}{2}$ is more moderate or more centrist, while the voter whose type is further from  $\frac{1}{2}$ is more extreme. Definition \ref{def: polarisation} then says that polarisation is higher if, for any given type $x$ of voter, there are more voters who are more extreme than $x$.

\begin{theorem}
\label{prop: polarisation} If distribution $\hat{H}$ dominates distribution $H$ in polarisation, then the payoff of the sender in each state is lower under $\hat{H}$ than under $H$.
\end{theorem}

Intuitively, increased polarisation means that voters tend to have more extreme preferences. This implies that they need stronger signals to be convinced to change their vote. Hence, increased polarisation has a similar effect to making signals less informative.

Finally, we can analyse how interaction of informativeness and polarisation affects the ability of the election to aggregate information, that is, to ensure that the government receives at least 50 percent if and only if the state is 1. If $H\left(\frac{1}{2}\right)\geq\frac{1}{2}$, Proposition \ref{lem: strategy} implies that the government wins the election in both states regardless of informativeness. Consider now the more interesting case when $H\left(\frac{1}{2}\right)<\frac{1}{2}$. Making the information structure more informative implies increasing $\frac{F_{0}\left[s^{*}\left(x\right)\right]}{F_{1}\left[s^{*}\left(x\right)\right]}$ for all $x$. One can define sequences of information structures along which $\frac{F_{0}\left[s^{*}\left(x\right)\right]}{F_{1}\left[s^{*}\left(x\right)\right]}$ increases from one to infinity. We then have the following result:

\begin{prop}
\label{prop: informativeness}Suppose that $H\left(\frac{1}{2}\right)<\frac{1}{2}$. Take a sequence of information structures $\left(F_{0}^{r},F_{1}^{r}\right)$ indexed by $r\in\left(-\infty,\infty\right)$, such that informativeness increases with $r$, $\lim_{r\rightarrow-\infty}\frac{F_{0}^{r}\left(s\right)}{F_{1}^{r}\left(s\right)}=1$ for all $s$, and $\lim_{r\rightarrow\infty}\frac{F_{0}^{r}\left(s\right)}{F_{1}^{r}\left(s\right)}=\infty$ for all $s$. Then there exist $r^{\prime},r^{\prime\prime}$ with $r^{\prime}<r^{\prime\prime}$, such that:
\begin{itemize}
\item for $r<r^{\prime}$, the government loses the election in both states;
\item for $r\in\left(r^{\prime},r^{\prime\prime}\right)$, the government wins the election in state $1$ only;
\item for $r>r^{\prime\prime}$, the government wins the election in both states whenever the polarisation of the distribution of types is sufficiently low; and wins the election in state 1 only otherwise.
\end{itemize}
\end{prop}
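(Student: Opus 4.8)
The plan is to reduce both vote shares to a one-dimensional comparative static in $r$ and then locate two threshold crossings. Using Corollary \ref{lem: vote shares}, I would first divide the numerator and denominator of each integrand by $F_1[s^*(x)]$ and rewrite everything through the single statistic $\rho_r(x) := F_0^r[s^{*r}(x)]/F_1^r[s^{*r}(x)]\ge 1$, obtaining
\[
V_0^*(r) = H(\tfrac12) + \int_{1/2}^1 (1-x)\,w(x,r)\,dH(x), \qquad V_1^*(r) = H(\tfrac12) + \int_{1/2}^1 x\,w(x,r)\,dH(x),
\]
where $w(x,r) = \frac{\rho_r(x)-1}{x\rho_r(x)-(1-x)}$. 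A direct differentiation gives $\partial w/\partial\rho = (2x-1)/(x\rho-(1-x))^2 > 0$ on $x\in(\tfrac12,1)$, and by Definition \ref{def: informativeness} increasing informativeness raises $F_0^r[s^{*r}(x)]$ and lowers $F_1^r[s^{*r}(x)]$, hence raises $\rho_r(x)$. Therefore both $V_0^*$ and $V_1^*$ are strictly increasing and continuous in $r$; this monotonicity is the engine of the whole argument.

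Next I would compute the endpoints. As $r\to-\infty$ the hypothesis $F_0^r/F_1^r\to 1$ forces $\rho_r(x)\to 1$, so $w\to 0$ pointwise and, by dominated convergence (the integrand is bounded by $1/x$), $V_0^*,V_1^*\to H(\tfrac12)<\tfrac12$. As $r\to\infty$ the hypothesis $F_0^r/F_1^r\to\infty$ gives $\rho_r(x)\to\infty$, so $w\to 1/x$, whence $V_1^*\to H(\tfrac12)+\int_{1/2}^1 1\,dH = H(1)\ge\tfrac12$ and $V_0^*\to L_0 := H(\tfrac12)+\int_{1/2}^1\frac{1-x}{x}\,dH(x)$. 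Integration by parts, using $\frac{d}{dx}\frac{1-x}{x}=-x^{-2}$ together with the boundary values $1$ and $0$, collapses this to the clean form $L_0=\int_{1/2}^1 H(x)\,x^{-2}\,dx$, which I will need for the polarisation comparison.

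With monotonicity and these limits in hand, I would define $r'$ as the unique solution of $V_1^*(r')=\tfrac12$ (it exists by the intermediate value theorem once $H(1)>\tfrac12$; the boundary $H(1)=\tfrac12$ is degenerate and I would flag it). For $r<r'$, Corollary \ref{lem: vote shares} gives $V_0^*<V_1^*<\tfrac12$, so the government loses both states. For $r>r'$ the government wins state $1$, and it additionally wins state $0$ precisely when $V_0^*(r)\ge\tfrac12$. Since $V_0^*$ is increasing with limit $L_0$, two cases arise: if $L_0>\tfrac12$ there is a unique $r''>r'$ (note $V_0^*(r')<V_1^*(r')=\tfrac12$, so the crossing is strictly later) with $V_0^*(r'')=\tfrac12$, yielding information aggregation on $(r',r'')$ and victory in both states for $r>r''$; if $L_0\le\tfrac12$ then $V_0^*<\tfrac12$ throughout and the government wins only state $1$ for every $r>r'$.

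Finally, to obtain the polarisation statement in the third bullet I would invoke the explicit form $L_0=\int_{1/2}^1 H(x)\,x^{-2}\,dx$: by Definition \ref{def: polarisation}, greater polarisation means $\hat H(x)\le H(x)$ for all $x\ge\tfrac12$, so $L_0$ is \emph{decreasing} in polarisation. Hence $L_0>\tfrac12$ — the regime in which the government eventually wins both states — holds exactly for sufficiently low polarisation, whereas sufficiently polarised type distributions keep $V_0^*$ below $\tfrac12$ for all $r$ and cap the government at a state-$1$ win. I expect the main obstacle to be the limit step: justifying $\rho_r(x)\to 1$ and $\to\infty$ from hypotheses stated for each fixed $s$ even though the cutoff $s^{*r}(x)$ itself drifts with $r$, and the accompanying dominated-convergence passage of the limit inside the integral. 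The remaining bookkeeping — the thresholds, their ordering, and the integration by parts — is routine.
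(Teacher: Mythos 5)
Your proposal is correct and follows essentially the same route as the paper's own proof: rewriting $V_0^*$ and $V_1^*$ through the single ratio $F_0\left[s^*(x)\right]/F_1\left[s^*(x)\right]$, establishing monotonicity in $r$ (the paper simply cites Proposition \ref{prop: informativeness_Blackwell}, whose proof is the computation you redo), evaluating the limits $H\left(\frac{1}{2}\right)$, $H(1)$, and $\int_{1/2}^{1}H(x)x^{-2}dx$ via the identical integration by parts, and ordering the thresholds using $V_0^*<V_1^*$, with the same polarisation dichotomy for the third bullet. The only cosmetic difference is that you pin down $r'$ and $r''$ as exact crossing points via the intermediate value theorem (which presumes continuity of $V_\theta^*$ in $r$, not guaranteed by the hypotheses), whereas the paper's monotonicity-only formulation sidesteps this; both treatments, yours and the paper's, leave the passage to the limit with the drifting cutoff $s^{*r}(x)$ at the informal level you flag.
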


Intuitively, when signals are not very informative, they cannot induce sufficiently many voters to move their belief sufficiently far from the prior, so the government cannot get enough votes even in state 1. By Theorem   \ref{prop: informativeness_Blackwell}, increased informativeness raises the government's vote share in each state, and by Proposition \ref{lem: strategy}, that share is higher in state 1 than in state 0. Hence, for moderately high informativeness, the government  wins the election in state 1 only, so the election aggregates information. If signals are very informative and polarisation is not too high, the government wins the election in both states. On the other hand, if signals are very informative but polarisation is high, the election still aggregates information.

%If signals are moderately informative, the government can receive enough votes to win the election in state 1 but not in state 0. If signals are very informative and polarisation is  high, Proposition \ref{prop: polarisation} implies that the sender cannot manipulate the election to ensure government victory in both states. If signals are very informative and polarisation is not too high, the government wins the election irrespective of the state -- thus, information does not aggregate.

%Intuitively, increased informativeness raises the share of voters voting for the government in both state. Thus, completely uninformative signals mean that the government cannot win in either states, while if signals are very informative, Propositions  \ref{prop: informativeness_Blackwell} and \ref{prop: polarisation} imply that the government wins the election in both states unless polarisation is high. At the same time, because the share of voters who vote for the government is lower in state 0 than in state 1, there is an intermediate range of information structures under which the government wins the election in state 1 only. In this range, the election aggregates information.

Proposition \ref{prop: informativeness} means that when polarisation is relatively low, an increase in informativeness can harm information aggregation. At the same time, when signals are very informative, information aggregation can be restored by increased polarisation.

%%%XXX

Note also that the election can aggregate information in the presence of a troll farm. Together with the fact that the troll farm increase the government's vote share in each state, this means that at intermediate levels of informativeness of independent signals, the presence of the troll farm can have a positive effect on information aggregation:

\begin{coroll}
\label{cor:trolls-efficiency}
Suppose that $H\left(\frac{1}{2}\right)<\frac{1}{2}$. There exists an information structure under which the government always loses the election in the absence of trolls, and wins the election in state 1 only in presence of trolls.
%Moreover, $\left.\alpha_{x}^*\right|_{x=\ot}=1$.
\end{coroll}

%XXX text

\section{Extensions}
\label{sec:extensions}
\subsection{Costly Trolls}
\label{subsec:limited reach}
The preceding analysis has assumed that targeting voters with messages from trolls is costless. However, in some settings this assumption may not be appropriate. In this section, we analyse what happens if increasing intensity \(\alpha_{x}\) carries a cost.

%There are various forms that this cost can take. 
For example, there may be a cost \(c>0\) of sending a message through another troll. If the mass of informative signals is normalised to one, and the mass of trolls' signals targeting a voter of type \(x\) is \(\mu_{x}\), then \(\alpha_x=\frac{\mu_x}{1+\mu_x}\), so the cost of replacing the voter's signals equals \(\frac{\alpha_x}{1-\alpha_x}c\), which is increasing and convex in intensity  $\alpha_x$.

In other settings, there may be an exogenous upper limit \(\hat{\alpha}\) on \(\alpha_{x}\). This could arise because platforms can detect and delete troll accounts if too many signals originate from trolls. Alternatively, a fraction \(1-\hat{\alpha}\) of each voter type may be beyond the reach of trolls because they do not use social media, or because they are sufficiently sophisticated to distinguish between a genuine signal and one originating from a troll farm. In this case, the cost is zero for \(\alpha_{x}\leq \hat{\alpha}\), and it becomes very large for \(\alpha_{x}>\hat{\alpha}\).

More generally, suppose that choosing intensity  \(\alpha_x\)  imposes a cost of $c\left(\alpha_x\right)$ on the sender, where \(c\) is an increasing, strictly convex, and continuously differentiable function. This is consistent with the first example of a cost that is linear in the mass of trolls, while the second example can be approximated as a limit of an increasing and convex function.
To keep the analysis tractable, we assume that \(u\!\left(V\right)=V\), that is, the sender aims to maximise the expected share of voters voting for the government. Then the equilibrium looks as follows:

%Denote by $\alpha_x$ the probability that voters of type $x$ receive a signal from the troll farm when increasing intensity is costly. Let the cost of choosing intensity  \(\alpha_x\)  be $c\left(\alpha_x\right)$, where \(c\) is an increasing, strictly convex, and continuously differentiable function. This is consistent with the first example of a cost that is linear in the mass of trolls, while the second example can be approximated as a limit of an increasing and convex function.

\begin{prop}\label{prop:costly trolls}
For voter types \(x\in \left(0,\ot\right]\), the optimal intensity is strictly positive. The optimal slant has support \(\left[\hat{s}\!\left(x\right),s^{*}\!\left(x\right)\right]\), where \(\hat{s}\!\left(x\right)\leq s^{*}\!\left(x\right)\) is uniquely defined.
For voter types \(x>\ot\), the optimal intensity is strictly positive if \(x\in \left[\hat{x},1\right)\), and equals zero otherwise, where \(\hat{x}\geq\ot\) is uniquely defined. An optimal slant has support \(\left[s^{*}\!\left(x\right),\infty\right)\).
Furthermore, there exists \(\overline{c}> 0\) such that if \(c'(\alpha_x)\leq \overline{c}\) for all \(\alpha_x\leq \alpha_x^*\), then the equilibrium coincides with the one described in Proposition \ref{lem: strategy}.

\end{prop}

In words, making trolls costly reduces the amount of trolls' signals that are sent. However, the margin of adjustment differs between pro-government voters
%(that is, those with \(x\leq \frac{1}{2}\)) 
and anti-government voters. For pro-government voters, the sender adjusts the slant by only sending moderately anti-government messages, but keeps the intensity positive. For anti-government voters, the sender adjusts the intensity by only targeting voters who are sufficiently anti-government, but may keep sending all pro-government messages.\footnote{While sending all pro-government messages is always optimal,  when $\alpha_x^*<\alpha_x^*$ the equilibrium support may also be a subset of \(\left[s^{*}\!\left(x\right),\infty\right)\).}

%The optimal intensity for all  pro-government voters $x\in(0,\ot]$ is positive, but the trolls send them only messages with a moderate anti-government slant, that is, messages sufficiently close to \(s^{*}\!\left(x\right)\). In contrast, among anti-government voters, trolls tend to target only voters leaning heavily against the government, namely voters whose types are sufficiently far from \(\frac{1}{2}\). For these voters, however, the trolls send all sufficiently pro-government slanted messages, $s\geq s^*(x)$. Hence, the adjustment occurs on the intensive margin for pro-government voters, and on the extensive margin for anti-government voters.

The last statement of the proposition implies that if the marginal cost of replacing signals is positive but sufficiently low, the equilibrium coincides with that of the baseline model. Thus, our earlier assumption that, among several strategies yielding the same payoff, the sender selects the one that minimises \(\alpha_{x}\) emerges as a unique equilibrium outcome when trolls are costly but the cost is not too high.
In this case, the results of Section \ref{sec:main results} on the effects of informativeness and polarisation remain unchanged. On the other hand, if the marginal cost of increasing intensity is very high, not sending any trolls is optimal. In this case, the equilibrium is equivalent to the benchmark case with no trolls, and greater informativeness leads to a higher probability of a socially optimal decision in each state, as in Section
\ref{sec:notrolls}.

\subsection{Non-Manipulable Information}
\label{sec:Non-Replaceable}
An important feature of our model is that all information that a voter receives is contained in the signal $s$, which trolls can mimic. Hence, by selecting intensity $\alpha_{x}$, the sender is able to replace \textit{any amount} of voter's information. In many situations, however, part of state-relevant information available to a voter cannot be replaced by a troll farm. For example, information that is received through personal experiences or offline interactions with other people, cannot be manipulated by social media bots.

Suppose that in addition to the signal  $s$ that the sender can mimic, each voter receives another signal  $\rho\in\mathbb{R}$, which cannot be mimicked or replaced. That signal represents all information that a voter receives through sources other than social media. In each state $\theta\in\left\{ 0,1\right\} $, signal $\rho$ is drawn from cumulative distribution function $G_{\theta}$ with density $g_{\theta}$.

Recall that in the baseline setting without the non-replaceable signal, a pair of distributions $\left(F_{0},F_{1}\right)$ from which signal $s$ is drawn constituted the information structure. Denote the pair $\left(F_{0},F_{1}\right)$ by $F$. With the additional signal $\rho$, the information structure that a voter faces is now $\left(F,G\right)$, where $G:=\left(G_{0},G_{1}\right)$.

Let $F^{\prime}$ be a garbling of $F$, that is, an information structure that can be obtained from $F$ by randomly drawing a signal $s^{\prime}\in\mathbb{R}$ after each observation of $s$ according to some distribution $\gamma\left(s^{\prime}\mid s\right)$. It follows from 
\cite{blackwell1951comparison} that if  $F$ is more informative than $F^{\prime}$ (in the sense of Definition \ref{def: informativeness}), then $F^{\prime}$ is a garbling of $F$. It is easy to see that replacing a signal by a message from trolls with probability $\alpha_{x}$ is a garbling of $F$. Less obvious is the reverse, namely, the fact that the payoff that the sender receives from any garbling of $F$ can be replicated using trolls. Formally:

\begin{lem}
\label{lem:garblings}Consider an information structure $\left(F,G\right)$. Let $W\left(F^{\prime},G\right)$ be the sender's expected payoff under information structure $\left(F^{\prime},G\right)$, where \textup{$F^{\prime}$} is a garbling of $F$. Then for each voter type $x$ there exists $\left(\alpha_{x},\tilde{F}_{x}\right)$ such that the sender can obtain $W\left(F^{\prime},G\right)$ by using a troll farm with strategy $\left(\alpha_{x},\tilde{F}_{x}\right)$ on the original information structure $\left(F,G\right)$.
\end{lem}

Because any strategy involving trolls implies a garbling of $F$, and any garbling of $F$ can be induced by trolls, the sender's problem is equivalent to selecting an optimal garbling of $F$. Hence, the set of belief distributions that the sender can induce is the set of belief distributions that can be induced by signal pairs ($\hat{s},\rho$), for all $\hat{s}$ that are less informative than $s$. Then if $s$ becomes more informative, the set of such belief distributions expands, which benefits the sender. Formally, we have the following result:

\begin{prop}
\label{prop:garblings_2_signals}If signal $s$ becomes more informative, the equilibrium expected payoff of the sender increases.
\end{prop}

This shows the result of Theorem \ref{prop: informativeness_Blackwell} extends to a setting in which only part of the voters' information can be replaced.

\subsection{Over- and Underreacting to Information}
\label{section:non-bayesian}
%\subsection{Belief Updating}

Substantial theoretical and empirical research has pointed to the fact that individuals may systematically deviate from Bayesian updating. In particular, voters can overweight or underweight signals \citep{tversky1992advances,ba2023over,augenblick2025overinference,ortoleva2024alternatives}. 
%This would lead them to form beliefs that are, respectively, too far from or too close to the prior, compared to Bayesian beliefs.  
%In this section, we analyse how such biases on part of the voters affect information aggregation and the ability of the sender to manipulate election outcomes.
%Deviations from Bayesian rationality can take different forms. Under probability weighting \citep{tversky1992advances}, individuals may overweigh the probabilities that are higher than some reference point, and underweigh probabilities that are below it. This might mean that, after receiving a signal, voters place their posteriors closer to the prior than Bayesian updating would suggest.\footnote{See also \cite{enke2019cognitive}, which produces similar comparative statics in a Bayesian framework.}. In terms of our model, this would mean that voters perceive signal realisation $s$ to be closer to zero than it actually is.
%An opposite phenomenon can be caused by, for example, correlation neglect.\footnote{See \cite{enke2019correlation} for empirical evidence of correlation neglect.} Suppose that each voter observes not one, but two identical signal realisations $s$, which are perfectly correlated. If voters do not realise that these signals are correlated, and treat them as independent, they would exaggerate their value, forming more extreme posterior beliefs. In terms of our model, this would mean that perceived signal realisation is further from zero than $s$.
To account for such phenomena, suppose that upon observing signal realisation $s$, a voter forms her posterior belief as if she observed signal realisation $\beta\left(s\right)$, where $\beta\left(\cdot\right)$ is a strictly increasing function that distorts the perception of the signal according to a particular type of belief updating. We assume that $\beta\left(0\right)=0$. Thus, not receiving an informative signal does not lead the voter to change her belief. Underweighting signals corresponds to the case when $\beta\left(s\right)>s$ for $s<0$, and $\beta\left(s\right)<s$ for $s>0$. In this case, the posterior belief for a given $s$ is more conservative -- that is, closer  to the prior -- than the Bayesian posterior belief. On the other hand, with overweighting, $\beta\left(s\right)<s$ for $s<0$, and $\beta\left(s\right)>s$ for $s>0$, implying less conservative beliefs.
%\footnote{Another potential deviation from Bayesian updating is voters being unaware of the existence of trolls. Evidence of such strategic naivet\'{e} has been found in communication (\citealp{cai2006overcommunication,jin2021no}), and in voting interactions (\citealp{patty2007letting,ginzburg2019collective}). We discuss the implications of this in the Online Appendix.}

More generally, we can introduce a partial order on  belief distortion functions $\beta$ in terms of how conservative the resulting posterior beliefs are. We define it as follows:

\begin{defn}
\label{def: conservatism}A belief distortion function $\hat{\beta}$ is more conservative than belief distortion function $\beta$ if and only if
%\begin{itemize}
%\item 
$\hat{\beta}\left(s\right)> \beta\left(s\right)$ for all $s<0$, and
%\item 
$\hat{\beta}\left(s\right)< \beta\left(s\right)$ for all $s>0$.
%\end{itemize}
\end{defn}

We can then show that deviations from Bayesian updating can limit the manipulative power of troll farms. Specifically, the next result shows that a more conservative belief distortion function reduces the payoff of the sender in each state:

\begin{prop}
\label{prop: nonbayesian}If belief distortion function $\hat{\beta}$ is more conservative than belief distortion function $\beta$, then the payoff of the sender in each state is lower under $\hat{\beta}$ than under $\beta$.
\end{prop}

The intuition is similar to that of Theorem \ref{prop: polarisation}: a more conservative distortion function moves the posterior belief of a given voter towards the prior. At the level of an individual voter, this has the same effect as an increase in polarisation at the aggregate level.
%Specifically,  voters with $x<\frac{1}{2}$ can be persuaded to vote for the government regardless of

\subsection{Naive Voters}\label{sec:naive}

%So far we assumed that all voters perfectly understand how to rationally interpret information in the presence of the troll farm. In this section, we study the implication of naive voters who  update their beliefs thinking that all signals are informative, that is, assuming that $\alpha_x=0$.  Let  $\phi\in[0,1]$ be the  fraction of naive voters, whereas a fraction  $1-\phi$ is \textit{sophisticated}.\footnote{Experimental literature has shown evidence of strategic naivet\'{e} in communication (\citealp{cai2006overcommunication,jin2021no}), and in voting interactions (\citealp{patty2007letting,ginzburg2019collective}).}  In the following we distinguish between two cases: the sender can target naive voters, or not.
%

So far we assumed that voters understand how to rationally interpret information in the presence of the troll farm. In reality, some voters may be naive, and update their beliefs as if the troll farm did not exist. Experimental literature has shown evidence of strategic naivety in communication (\citealp{cai2006overcommunication,jin2021no}), and in voting interactions (\citealp{patty2007letting,ginzburg2019collective}).

Suppose that with probability $\phi\in[0,1]$ each voter is naive, and updates the belief thinking that all signals are informative, that is, assuming that $\alpha_x=0$. To keep the analysis tractable, let \(u\!\left(V\right)=V\), and hence  the sender aims to maximise the expected share of voters voting for the government.

% Imagine first targeting naive voters is feasible. 
% %When $\phi=0$ for all $x$, the setting is equivalent to our baseline model in Section \ref{sec:equilibrium}. 
% If these voters can be targeted,
% %-- that is, if for every type $x$ of the voter the sender can condition her choice of $\alpha_{x}$ and $\tilde{F}_{x}$ on whether the voter is naive -- 
% then any naive voter can be persuaded to vote for the government by setting $\alpha_{x}\rightarrow 1$ and choosing $\tilde{F}_{x}$ in such a way that $\tilde{F}_{x}\left[s^{*}\left(x\right)\right]=0$. Under this strategy, every naive voter will, with probability one, receive a signal that induces her to vote for the government. All naive voters vote for the government, while for the remaining voters, the sender chooses a strategy equivalent to the one described in Section \ref{sec:equilibrium}. Hence, adding naive voters is isomorphic to increasing $H\left(0\right)$; thus, the results of Section \ref{sec:main results} remain unchanged.

If the sender can target  naive voters, then any such voter can be persuaded to vote for the government by setting $\alpha_{x}\rightarrow 1$ and choosing $\tilde{F}_{x}$ in such a way that $\tilde{F}_{x}\left[s^{*}\left(x\right)\right]=0$. Hence, at the equilibrium   every naive voter will vote for the government, while for the remaining voters, the sender chooses the strategy described in Section \ref{sec:equilibrium}. 
%Hence, adding naive voters is isomorphic to increasing $H\left(0\right)$. 
Thus, the comparative statics of Section \ref{sec:main results} remain unchanged.

% ASSUME LINEARITY

%\begin{prop}
%bla
%\end{prop}

Now suppose  that the sender cannot target   naive voters. 
Then to anti-government voters, as before, trolls only send  $s\geq s^*(x)$, as signals  $s< s^*(x)$ cause naive voters to vote against the government. Choosing intensity then involves a trade-off: increasing $\alpha_x$ means gaining votes from naive voters, but losing some votes of the non-naive voters.

%If intensity remains at $\alpha_x^*$, the government's vote share remains unchanged. Increasing intensity above $\alpha_x^*$ increases the probability that a naive voter votes for the government, but forces the sender to adjust the slant to reduce the set of signals after which the belief of non-naive voters remains at $x$. Hence, the sender gains votes from naive voters, but loses some votes of non-naive voters. 

For pro-government voters, sending anti-government messages ceases to be optimal, because naive voters will respond to them by voting against the government. Instead, the sender can persuade all pro-government voters to vote for the government by setting  $\alpha_x\rightarrow1$ and sending pro-government messages. Hence, for pro-government voters, the optimal slant changes but the voting outcome remains unchanged.

The next result shows that the effects of the political environment on the sender's payoff, summarised in Theorems  \ref{prop: informativeness_Blackwell} and \ref{prop: polarisation} remain unchanged in the presence of naive voters:

%If intensity remains at $\alpha_x^*$, anti-government voters  will vote for the government if and only if they receive a signal  $s\geq s^*(x)$. Increasing intensity above $\alpha_x^*$ increases the probability that a naive voter votes for the government, but drives the belief of non-naive voters below $x$. 
%However, when adjusting the slant by decreasing the frequency with which such signals are sent, choosing $\alpha_x>\alpha_x^*$ can become optimal. 
%In thNevertheless, also if naive voters cannot be targetted, the results of Section \ref{sec:main results} remain unchanged.

\begin{prop}
\label{pro:naive}
Suppose the sender cannot target naive voters. If signals become more informative, or if the electorate becomes less polarised, as defined in Definitions \ref{def: informativeness} and \ref{def: polarisation}, 
the government's vote share in each state weakly increases.
\end{prop}

The discussion above implies that even a small fraction of naive voters changes the sender's optimal slant. This may suggest that the result of Proposition \ref{lem: strategy} that pro-government voters receive anti-government messages is knife-edge. However, this is not the case: when trolls are costly and the cost is sufficiently high, setting $\alpha_x\rightarrow1$ ceases to be optimal. Hence, when both naivety and costs are present, choosing between sending anti-government messages and sending pro-government messages involves a trade-off between lower costs and higher vote share. In this case, if the share of naive voters is sufficiently low, sending anti-government messages to pro-government voters remains optimal.

\section{Conclusions}\label{sec:discussion}
% Emulate, cheap, targetting -> persuasion in modern digital environments
%Persuasion via troll farm differs from other forms of political persuasion in at least three features: source uncertainty, microtargeting of specific voters, and the ability to generate large quantities of messages. 
%
This paper has analysed the strategy of a troll farm that aims to persuade a heterogeneous electorate to vote for the government by targeting each voter with a signal that emulates informative signals. It has shown that the optimal strategy of the troll farm differs considerably depending on whether a  voter is ex ante against or in favour of the government. For the former types of voters, the troll farm sends pro-government slanted messages in order to increase their quantity while keeping them persuasive. For the latter voters, the troll farm floods the information environment with messages with an anti-government slant in order to undermine their persuasiveness.

From the analysis it follows that features of the political environment that are normally considered to be positive can in fact reduce the efficiency of electoral outcomes when troll farms are present. In particular, more informative independent signals 
%-- for example, as a consequence of higher quality of media -- 
can amplify the power of troll farms. On the other hand, higher polarisation of the electorate, often considered to be a problem, can limit the manipulative power of troll farms. %Likewise, deviations from Bayesian rationality can make electoral outcomes more efficient.
At the same time, when the median voter is biased against the government and signals are moderately informative, the presence of a troll farm can move the election towards the optimal outcome.

The framework developed in this paper can help answer normative questions about regulation of online platforms. In particular, the potential negative effect of signal informativeness, and the fact that troll farms can enhance efficiency of electoral outcomes, imply that optimal regulation may be less straightforward than might appear. We leave these questions for further research.

%At the same time, the results show that the troll farm may sometimes have a positive effect on electoral outcomes. Specifically, Corollary \ref{cor:trolls-efficiency} states that when the median voter is biased against the government and signals are moderately informative, the presence of a troll farm can move the election towards the optimal outcome.

%Apart from voting, the model can apply to various other interactions in which heterogeneous individuals make binary decisions. For example, individual decisions on whether to take a vaccine, or on whether to participate in a climate protest, are  based on information that individuals acquire also from social media. Troll farms are known to have spread vaccine-related \citep{ferrara2016rise} and climate-related \citep{strudwicke2020junkscience} misinformation. Our analysis suggests a framework for analysing the optimal strategy of troll farms in these situations.

% targetting commetn maybe not great

%Limitations and suggestions for further research:
    %Replacing all signals -- what if receiver gets many signals, but only some can be replace
    
    %If sender can't micro target naive voters, then there is a tradeoff: increasing alpha and f-tlde so that the belief falls below x wins more naive voters but loses sophisticated voters (this goes to the section)

    %applicability to other situations: vaccinations, climate change

\newpage

\appendix
\section{Mathematical Appendix}

\subsection{Proof of Proposition \ref{lem: strategy}}

Take a voter with type $x\in\left[0,1\right]$. Suppose that the sender has chosen $\alpha_{x}$ and $\tilde{f}_{x}$. Then the voter's belief after observing signal $s$ equals
\begin{equation}
\label{eq:belief}
\pi_x\left(s\right)
%=
%\frac{\left(1-\alpha_{x}\right)f_{1}\left(s\right)+\alpha_{x}\tilde{f}_{x}\left(s\right)}{\left[\left(1-\alpha_{x}\right)f_{1}\left(s\right)+\alpha_{x}\tilde{f}_{x}\left(s\right)\right]+\left[\left(1-\alpha_{x}\right)f_{0}\left(s\right)+\alpha_{x}\tilde{f}_{x}\left(s\right)\right]}
=\frac{1}{1+\frac{\left(1-\alpha_{x}\right)f_{0}\left(s\right)+\alpha_{x}\tilde{f}_{x}\left(s\right)}{\left(1-\alpha_{x}\right)f_{1}\left(s\right)+\alpha_{x}\tilde{f}_{x}\left(s\right)}}.
\end{equation}
For the subsequent analysis, note that $\pi_x\left(s\right)$ is decreasing in $\alpha_{x}$ and in $\tilde{f}_{x}\left(s\right)$ if and only if $f_1\left(s\right)>f_{0}\left(s\right)$, that is, if and only if $s>0$. We will prove the result separately for pro-government voters (those with  $x\leq \frac{1}{2}$), and for anti-government voters (those with  $x> \frac{1}{2}$).

\paragraph{Pro-government voters.} 
\label{page:pro-gov}
Take a voter with type $x\leq\frac{1}{2}$. For this voter we have $s^{*}\left(x\right)<0$.
For any $s\in\left[s^{*}\left(x\right),0\right]$, without trolls we have $\pi_x\left(s\right)\geq x$. Since $\pi_x\left(s\right)$ is increasing
in $\alpha_{x}$ and in $\tilde{f}_{x}\left(s\right)$ for all $s<0$, we have $\pi_x\left(s\right)\geq x$ also with trolls. At the same time, for all $s>0$ we have $f_{1}\left(s\right)>f_{0}\left(s\right)$, and so $\pi_x\left(s\right)>\frac{1}{2}\geq x$ for any $\left(\alpha_{x}, \tilde{f}_{x}\left(s\right)\right)$. We can conclude that for all $s\geq s^{*}\left(x\right)$ we have $\pi_x\left(s\right)\geq x$,  so the voter votes for the government regardless of $\alpha_{x}$ and $\tilde{f}_{x}$.

At the same time, for such voters, the sender can ensure that $\pi_x\left(s\right)=\frac{1}{2}\geq x$ also for $s<s^{*}\left(x\right)$ by setting $\alpha_{x}=1$. Hence, the sender can always ensure that such voters vote for the government. We can conclude that at the optimum, $\pi_x\left(s\right)\geq x\text{ for all }s\in\mathbb{R}$, as any other outcome implies a lower payoff for the sender.
Therefore,
%and because the sender tries to minimise the number of trolls used keeping all else equal,
the optimal strategy solves
\[
\min\alpha_{x}\text{ subject to }\pi_x\left(s\right)\geq x\text{ for all }s\in\mathbb{R}.
\]

We can now determine the optimal shape of $\tilde{f}_{x}$. 
%Note that because $s^{*}\left(x\right)<0$, for all $s<s^{*}\left(x\right)$, $\pi_x\left(s\right)$ is increasing in $\alpha_{x}$ and in $\tilde{f}_{x}\left(s\right)$.
First, we can show that $\pi_x\left(s\right)=x$ for all $s \leq s^{*}\left(x\right)$. To see this, suppose on the contrary
%
%The optimal strategy of the sender ensures that $\pi_x\left(s\right)\geq x$ for all $s<s^{*}\left(x\right)$. 
that there exists a positive-measure set $B(x)\subseteq\left(-\infty,s^{*}\left(x\right)\right]$ such that 
$\pi_x\left(s\right)>x$ for all $s\in B(x)$. Then the sender can reduce $\tilde{f}_{x}\left(s\right)$ for all $s\in B(x)$ and increase $\tilde{f}_{x}\left(s\right)$ for all $s\in\left(-\infty,s^{*}\left(x\right)\right]\setminus B(x)$ by some small $\epsilon_s>0$. Recall that $\pi_x\left(s\right)$ is increasing in $\alpha_{x}$ and in $\tilde{f}_{x}\left(s\right)$ when $s\leq s^{*}\left(x\right)<0$.
Hence, as a result of this deviation, $\pi_x\left(s\right)>x$ for all $s\in\mathbb{R}$. Then the sender can reduce $\alpha_{x}$ while ensuring that $\pi_x\left(s\right)\geq x$ for all $s\in\mathbb{R}$. Hence, the original strategy of the sender was not optimal.

Next, we show that $\tilde{f}_{x}\left(s\right)=0$ for all $s > s^{*}\left(x\right)$. To see this, suppose the opposite is the case. Let $C(x):=\left\{s>s^{*}\left(x\right):\tilde{f}_{x}\left(s\right)>0 \right\}$.
%on the contrary that there exists a positive-measure set $C(x)\subseteq\left(s^{*}\left(x\right),\infty\right)$ such that $\tilde{f}_{x}\left(s\right)>0$ for all $s\in C(x)$. 
If $C(x)$ has positive measure, the sender can reduce $\tilde{f}_{x}\left(s\right)$ for $s\in C(x)$ and increase $\tilde{f}_{x}\left(s\right)$ for $s\in\left(-\infty,s^{*}\left(x\right)\right]$. Because $\pi_x\left(s\right) \geq x$ for all $s\in\mathbb{R}$, this deviation ensures that $\pi_x\left(s\right)>x$ for all $s\in\mathbb{R}$. Then the sender can similarly reduce $\alpha_{x}$ while ensuring that $\pi_x\left(s\right)\geq x$ for all $s\in\mathbb{R}$, implying that the original strategy was not optimal.

Let 
\begin{equation}
\label{eq:kappa}
\kappa\left(s\right):=\frac{xF_{0}\left(s\right)-\left(1-x\right)F_{1}\left(s\right)}{1-2x}. 
\end{equation}
The above logic implies that the optimal strategy of the sender is to set $\tilde{f}_{x}\left(s\right)=0$ for all $s> s^{*}\left(x\right)$, and for $s \leq s^{*}\left(x\right)$ to set $\tilde{f}_{x}\left(s\right)$ such that $\pi_x\left(s\right)=x$, that is
\begin{equation}
  \frac{1}{1+\frac{\left(1-\alpha_{x}\right)f_{0}\left(s\right)+\alpha_{x}\tilde{f}_{x}\left(s\right)}{\left(1-\alpha_{x}\right)f_{1}\left(s\right)+\alpha_{x}\tilde{f}_{x}\left(s\right)}}=x
\iff  \tilde{f}_{x}\left(s\right)=\frac{1-\alpha_{x}}{\alpha_{x}}\kappa^{\prime}\left(s\right).\label{eq:f-tilde x<0.5}
\end{equation}

\noindent
Note that $\tilde{f}_{x}\left(s\right)$ as defined above is positive. To see this, observe that at $s=s^{*}\left(x\right)$ we have $m\left(s\right)=\frac{f_{1}\left(s\right)}{f_{0}\left(s\right)}=\frac{x}{1-x}$, and hence $xf_{0}\left(s\right)-\left(1-x\right)f_{1}\left(s\right)=0$. As $m\left(s\right)$ is increasing in $s$, we have $\frac{f_{1}\left(s\right)}{f_{0}\left(s\right)}<\frac{x}{1-x}$ for all $s<s^{*}\left(x\right)$, and hence $xf_{0}\left(s\right)-\left(1-x\right)f_{1}\left(s\right)>0$. 
At the same time, $\alpha_{x}$ must be chosen in such a way that $\tilde{f}_{x}\left(s\right)$ is a density, that is, such that $\int_{-\infty}^{s^{*}\left(x\right)}\tilde{f}_{x}\left(s\right)ds=1$. Consequently, the optimal intensity $\alpha_{x}^*$ solves
\begin{equation}
\label{eq:alpha_x<1/2}
  \int_{-\infty}^{s^{*}\left(x\right)}\dfrac{1-\alpha_{x}^*}{\alpha_{x}^*}\kappa^{\prime}\left(s\right)ds=1
\iff  \alpha_{x}^*=\frac{\kappa\left[s^{*}\left(x\right)\right]}{1+\kappa\left[s^{*}\left(x\right)\right]}.
\end{equation}
% \noindent
% Substituting this into \eqref{eq:f-tilde x<0.5} yields the optimal density of trolls' messages for a given $x$,
% \[
% \tilde{f}_{x}^*\left(s\right)=\frac{xf_{0}\left(s\right)-\left(1-x\right)f_{1}\left(s\right)}{xF_{0}\left[s^{*}\left(x\right)\right]-\left(1-x\right)F_{1}\left[s^{*}\left(x\right)\right]}\text{ for all }s<s^{*}\left(x\right),
% \]
% which together with the fact that $\tilde{f}_{x}^*\left(s\right)=0$ for all $s\geq s^{*}\left(x\right)$ concludes the description of senders' optimal strategy $\left(\alpha_{x}^*,\tilde{f}_{x}^*\right)$ for $x\leq\frac{1}{2}$.

\paragraph{Anti-government voters.}
Take a voter with type $x\in\left(\frac{1}{2},1\right]$. For such a voter, $s^{*}\left(x\right)>0$. Given the sender's choice of $\alpha_{x}$ and $\tilde{f}_{x}$, let $A\left(x\right)\subseteq\mathbb{R}$ be the set of signals such that this voter votes for the government if and only if she receives a signal $s\in A\left(x\right)$. That is, $A\left(x\right)$ is a set of signals such that $\pi_x\left(s\right)\geq x$ if and only if $s\in A\left(x\right)$. Then the probability that the voter votes for the government in state $\theta\in\left\{ 0,1\right\} $ equals the probability that she observes a signal $s\in A\left(x\right)$. This probability equals
\[
p_\theta\left(x\right)=\int_{s\in A\left(x\right)}\left[\left(1-\alpha_{x}\right)f_{\theta}\left(s\right)+\alpha_{x}\tilde{f}_{x}\left(s\right)\right]ds.
\]

Next we state a lemma that defines the optimal $A(x)$:

\begin{lem} \label{lem:A(x)_IF_x>1/2}
At the optimum, $A(x)=\left[s^*(x),\infty\right)$ for all $x>\ot$.
\end{lem}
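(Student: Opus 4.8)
The plan is to recast membership in $A(x)$ as a single pointwise inequality on the troll density, and then establish the two inclusions $A(x)\subseteq\left[s^{*}(x),\infty\right)$ and, at any optimum, $A(x)\supseteq\left[s^{*}(x),\infty\right)$. First I would introduce the (unnormalised) troll density $\mu_{x}(s):=\alpha_{x}\tilde{f}_{x}(s)$, so that $\alpha_{x}=\int\mu_{x}$ and $p_{\theta}(x)=\int_{A(x)}\left[(1-\alpha_{x})f_{\theta}(s)+\mu_{x}(s)\right]ds$. Rearranging the belief in \eqref{eq:belief}, the condition $\pi(s)\geq x$ becomes $(2x-1)\left[(1-\alpha_{x})\kappa_{x}^{\prime}(s)-\mu_{x}(s)\right]\geq0$, where $\kappa_{x}^{\prime}(s)=\frac{(1-x)f_{1}(s)-xf_{0}(s)}{2x-1}$. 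Since $x>\frac{1}{2}$ gives $2x-1>0$, this is equivalent to $\mu_{x}(s)\leq(1-\alpha_{x})\kappa_{x}^{\prime}(s)$. By the monotone likelihood ratio assumption and the definition of $s^{*}(x)$, the quantity $\kappa_{x}^{\prime}(s)$ is positive for $s>s^{*}(x)$ and negative for $s<s^{*}(x)$.

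The inclusion $A(x)\subseteq\left[s^{*}(x),\infty\right)$ is then immediate: for $s<s^{*}(x)$ we have $\kappa_{x}^{\prime}(s)<0\leq\mu_{x}(s)$, so the inequality $\mu_{x}(s)\leq(1-\alpha_{x})\kappa_{x}^{\prime}(s)$ fails and hence $\pi(s)<x$ for \emph{every} feasible strategy (this is the step already flagged in the main text, where adding trolls to signals $s>0$ only pushes the belief further below $x$).

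The substantive direction I would prove by exhibiting a profitable deviation. Suppose, toward a contradiction, that at an optimal strategy the set $D:=\left[s^{*}(x),\infty\right)\setminus A(x)$ has positive measure; on $D$ we have $\mu_{x}(s)>(1-\alpha_{x})\kappa_{x}^{\prime}(s)\geq0$. Consider the modified density $\mu_{x}^{\prime}$ equal to $\mu_{x}$ off $D$ and to $0$ on $D$; this lowers the total mass to $\alpha_{x}^{\prime}=\alpha_{x}-\int_{D}\mu_{x}<\alpha_{x}$. Because $1-\alpha_{x}^{\prime}>1-\alpha_{x}$ and $\kappa_{x}^{\prime}\geq0$ throughout $\left[s^{*}(x),\infty\right)$, the threshold $(1-\alpha_{x}^{\prime})\kappa_{x}^{\prime}(s)$ rises weakly at every $s\geq s^{*}(x)$: thus each $s\in A(x)$ remains admissible, and each $s\in D$ now satisfies $\mu_{x}^{\prime}(s)=0\leq(1-\alpha_{x}^{\prime})\kappa_{x}^{\prime}(s)$, so that $A^{\prime}(x)=\left[s^{*}(x),\infty\right)$. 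Writing $\left[s^{*}(x),\infty\right)=A(x)\sqcup D$ and computing the change in the vote probability gives, for each $\theta$,
\[
p_{\theta}^{\prime}(x)-p_{\theta}(x)=(\alpha_{x}-\alpha_{x}^{\prime})\int_{A(x)}f_{\theta}(s)\,ds+(1-\alpha_{x}^{\prime})\int_{D}f_{\theta}(s)\,ds>0,
\]
since both integrals are strictly positive and $\alpha_{x}>\alpha_{x}^{\prime}$. As this strictly raises the pro-government vote share in both states, and therefore $\mathbb{E}\left[u(V_{\theta})\right]$, the original strategy could not have been optimal. Hence $D$ has measure zero and $A(x)=\left[s^{*}(x),\infty\right)$.

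I expect the main obstacle to be the coupling between the single scalar $\alpha_{x}$ and the per-signal belief constraints: trimming troll mass on $D$ simultaneously moves every belief threshold. The key observation that dissolves this difficulty is that lowering $\alpha_{x}$ only \emph{relaxes} the constraint $\mu_{x}(s)\leq(1-\alpha_{x})\kappa_{x}^{\prime}(s)$ on the region $\left[s^{*}(x),\infty\right)$ where $\kappa_{x}^{\prime}\geq0$, so the deviation cannot inadvertently eject previously admissible signals from $A(x)$ — which is exactly what makes the sign of the displayed difference unambiguous in both states.
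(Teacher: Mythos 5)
Your proof is correct, and while it shares the paper's overall skeleton (establish $A(x)\subseteq\left[s^{*}(x),\infty\right)$ by belief monotonicity, then rule out a positive-measure gap by a profitable deviation), the key step is executed by a genuinely different construction. The paper's deviation keeps $\alpha_{x}$ \emph{fixed}: it takes a strict subset $C(x)$ of the failed region $B(x)$, zeroes the troll density there, and dumps that mass onto $B(x)\setminus C(x)$, so that only $C(x)$ joins $A(x)$ while the remaining failed signals sink further; this sidesteps entirely the coupling between $\alpha_{x}$ and the per-signal beliefs. Your deviation instead deletes the troll mass on the \emph{whole} failed region $D$ and lowers $\alpha_{x}$, which recovers $A^{\prime}(x)=\left[s^{*}(x),\infty\right)$ in one shot, but forces you to verify that shrinking $\alpha_{x}$ does not eject previously admissible signals --- and your reformulation of the constraint as the pointwise linear inequality $\mu_{x}(s)\leq\left(1-\alpha_{x}\right)\kappa_{x}^{\prime}(s)$, with $\kappa_{x}^{\prime}\geq0$ on $\left[s^{*}(x),\infty\right)$, is exactly what makes that verification transparent. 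That reformulation is a nice device the paper only exploits later (when solving for the optimal $\tilde{f}_{x}$), and your variant has the side benefit of showing that discarding trolls on the failed region is itself profitable, which dovetails with the paper's minimal-troll selection rule. One small imprecision: in your final display, $\int_{A(x)}f_{\theta}(s)\,ds$ need not be strictly positive (a priori $A(x)$ could have measure zero), but this is harmless since the second term $\left(1-\alpha_{x}^{\prime}\right)\int_{D}f_{\theta}(s)\,ds$ is strictly positive on its own, given $\alpha_{x}^{\prime}<\alpha_{x}\leq1$, full support of $f_{\theta}$, and $D$ of positive measure.
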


\begin{proof}
First, we show that if $s<s^{*}\left(x\right)$, then $s\notin A\left(x\right)$.
Recall that $\pi_x\left(s\right)$ is decreasing in $\alpha_{x}$ and in $\tilde{f}_{x}\left(s\right)$  if and only if $s>0$.
Consider any signal $s$. If $s<0$, then 
$\pi_x\left(s\right)<\frac{1}{2}<x$ regardless of $\alpha_{x}$ and $\tilde{f}_{x}$. Hence, if $s<0$, then $s\notin A\left(x\right)$. On the other hand, if $s\in\left[0,s^{*}\left(x\right)\right)$, then without trolls (that is, when $\alpha_{x}=0$ or $\tilde{f}_{x}\left(s\right)=0$), the voter votes against the government. Since $\pi_x\left(s\right)$ is decreasing in $\alpha_{x}$ and in $\tilde{f}_{x}\left(s\right)$, the voter votes against the government for any $\alpha_{x}$ and $\tilde{f}_{x}$. Hence, if $s\in\left[0,s^{*}\left(x\right)\right)$, then $s\notin A\left(x\right)$.

To show that all $s \geq s^{*}\left(x\right)$ belong to $A\left(x\right)$, assume to the contrary that there exists a set $B(x)\subseteq \left[s^*(x),\infty\right)$ with positive measure  such that $\pi_x(s)<x$ for all $s\in B(x)$. Note that 
%because for all $s\in B(x)$ it holds that $s\geq s^*(x)$, 
it must be true that $\alpha_x>0$ \emph{and} $\tilde{f}_x(s)>0$ all $s\in B(x)$, because otherwise $s\in A(x)$. Now take an arbitrary  strict subset $C(x)\subset B(x)$ with positive measure, let $\tilde{f}_x(s)=0$ for all $s\in C(x)$,  increase $\tilde{f}_x(s)$ for all $s\in B(x)\setminus C(x)$ such that $\tilde{f}_x(s)$ remains a density, while leaving $\tilde{f}_x(s)$ unchanged for all $s\in A(x)\setminus B(x)$. As a result, $C(x)$ becomes part of $A\left(x\right)$.
This strictly increases
 %becomes part of $A\left(x\right)$, increasing
 the sender's payoff, which implies that the original strategy was not optimal. We can thus conclude that, at the optimum, $A\left(x\right)=\left[s^{*}\left(x\right),\infty\right)$.
\end{proof}
%
%
%show that  $s\in A\left(x\right)$ for all $s\geq s^{*}\left(x\right)$. To see this, suppose the opposite is the case, that is, that there exists a nonempty set $B(x)\subseteq\left[s^{*}\left(x\right),\infty\right)$ such that $s\notin A\left(x\right)$ for all $s\in B(x)$. Note that because $s>s^*(x)$, it must be true that $\alpha_x>0$ and $\tilde{f}(s)>0$, because otherwise $s\in A(x)$. Take a set $C(x)\subset B(x)$, and reduce $\tilde{f}_{x}\left(s\right)$ to zero for all $s\in C(x)$, while increasing $\tilde{f}_{x}\left(s\right)$ for $s\in B(x)\setminus C(x)$ and leaving $\tilde{f}_{x}\left(s\right)$ unchanged elsewhere.
%\footnote{That is, move the mass of trolls' messages from $C$ to $B\setminus C$.}
%As a result of this deviation, for any $s\in C(x)$ a voter votes for the government, and thus the set of all $s$ such that this is the case grows from $A(x)$
 %to $A(x)+C(x)$.
% Hence, at the optimum, we have
% \[
% p_\theta\left(x\right)=\left(1-\alpha_{x}\right)\left(1-F_{\theta}\left[s^{*}\left(x\right)\right]\right)+\alpha_{x}\left(1-\tilde{F}_{x}\left[s^{*}\left(x\right)\right]\right).
% \]

% At the optimum, this expression is strictly increasing in $\alpha_{x}$. To see this, suppose the opposite is the case. Then the sender optimally sets $\alpha_{x}=0$

Next, we can show that it is optimal for the sender to set $\pi_x\left(s\right)=x$ for all $s\geq s^{*}\left(x\right)$. 
%To see this, note that $p_{\theta}\left(x\right)$ is strictly increasing in $\tilde{f}_{x}\left(s\right)$ for $s\in A\left(x\right)$
%, and does not depend on $\tilde{f}_{x}\left(s\right)$ for $s\notin A\left(x\right)$. Then,
To see this, take a set $D\left(x\right)\subset\left[s^{*}\left(x\right),\infty\right)$, and suppose that $\pi_x\left(s\right)>x$ for all $s\in D(x)$. If there also exists a set $D'\left(x\right)\subset\left(-\infty,s^{*}\left(x\right)\right]$ such that $\tilde{f}_{x}\left(s\right)>0$ for all $s\in D'\left(x\right)$, then the sender can reduce $\tilde{f}_{x}\left(s\right)$ for $s\in D'\left(x\right)$, and increase $\tilde{f}_{x}\left(s\right)$ for $s\in D\left(x\right)$ such that $\pi_x\left(s\right)$ remains above $x$ for $s\in D(x)$. This leaves $A\left(x\right)$ unchanged while increasing $p_\theta\left(x\right)$ in both states, implying that the original strategy is not optimal. On the other hand, if there is no such a set $D'\left(x\right)$, then $p_{\theta}\left(x\right)=\left(1-\alpha_{x}\right)\left(1-F_{\theta}\left[s^{*}\left(x\right)\right]\right)+\alpha_{x}$, which is increasing in $\alpha_{x}$. Then the sender can increase $\tilde{f}_{x}\left(s\right)$ for all $s\in D(x)$ while decreasing $\tilde{f}_{x}\left(s\right)$ for all $s\in\left[s^{*}\left(x\right),\infty\right)\setminus D(x)$ in such a way that $\pi_x\left(s\right)>x$ for all $s\geq s^{*}\left(x\right)$ as a result. Then the sender can increase $\alpha_{x}$ while ensuring that $\pi_x\left(s\right)\geq x$ still holds for all $s\geq s^{*}\left(x\right)$. This would increase $p_{\theta}\left(x\right)$, implying that the original strategy was not optimal.

%\footnote{Note that at the optimum, for a given $\tilde{f}_{x}\left(s\right)$, $p_{\theta}\left(x\right)$ is increasing in $\alpha_{x}$. %, that is $f_{\theta}\left(s\right)\leq\tilde{f}_{x}\left(s\right)$  for all $s\in A\left(x\right)$
%This is because otherwise it would be weakly optimal for the sender to set $\alpha_{x}=0$, but then the sender could increase her payoff by allocating the entire mass of trolls' messages to sufficiently high values of $s$ and increasing $\alpha_{x}$ by a sufficiently small amount.}
%
%To see this, take a set $B\subset\left[s^{*}\left(x\right),\infty\right)$, and suppose that $\pi_x\left(s\right)>x$ for all $s\in B$. Then the sender can increase $\tilde{f}_{x}\left(s\right)$ for all $s\in B$ in such a way that $\pi_x\left(s\right)\geq x$ still holds. This would increase $p_\theta\left(x\right)$, implying that the original choice is not optimal.
%
Hence, at the optimum %$\tilde{f}_{x}\left(s\right)=0$ for all $s< s^{*}\left(x\right)$, and
for all $s\geq s^{*}\left(x\right)$, $\tilde{f}_{x}\left(s\right)$ is such that
\begin{align}
 & \pi_x(s)=\frac{1}{1+\frac{\left(1-\alpha_{x}\right)f_{0}\left(s\right)+\alpha_{x}\tilde{f}_{x}\left(s\right)}{\left(1-\alpha_{x}\right)f_{1}\left(s\right)+\alpha_{x}\tilde{f}_{x}\left(s\right)}}=x\nonumber \\
\iff & \tilde{f}_{x}\left(s\right)=\frac{1-\alpha_{x}}{\alpha_{x}}\frac{\left(1-x\right)f_{1}\left(s\right)-xf_{0}\left(s\right)}{2x-1}=\dfrac{1-\alpha_{x}}{\alpha_{x}}\kappa^{\prime}\left(s\right).\label{eq:f-tilde}
\end{align}
Note that $\tilde{f}_{x}\left(s\right)$ as defined above is strictly positive. To see this, observe that at $s=s^{*}\left(x\right)$ we have $m\left(s\right)=\frac{f_{1}\left(s\right)}{f_{0}\left(s\right)}=\frac{x}{1-x}$, and hence $\left(1-x\right)f_{1}\left(s\right)-xf_{0}\left(s\right)=0$. As $m\left(s\right)$ is increasing in $s$, we have $\frac{f_{1}\left(s\right)}{f_{0}\left(s\right)}>\frac{x}{1-x}$ for all $s>s^{*}\left(x\right)$, and hence $\left(1-x\right)f_{1}\left(s\right)-xf_{0}\left(s\right)>0$.

Then the probability that the voter votes for the government in state $\theta$ equals
\[
p_\theta\left(x\right)=\left(1-\alpha_{x}\right)\int_{s\in A\left(x\right)}\left[f_{\theta}\left(s\right)+\kappa^{\prime}\left(s\right)\right]ds.
\]

Thus, conditional on $\tilde{f}_{x}\left(s\right)$ being such that $\pi_x\left(s\right)=x$ for all $s\geq s^{*}\left(x\right)$, the vote share of the government in each state is strictly decreasing in $\alpha_x$. Hence, it is optimal to select the smallest possible $\alpha_{x}$ under which $\tilde{f}_{x}\left(s\right)$ is a density and $\pi_x\left(s\right)=x$ for all $s\geq s^{*}\left(x\right)$. This implies that $\tilde{f}_{x}\left(s\right)=0$ for all $s \notin A\left(x\right)$, because otherwise, the sender can decrease $\tilde{f}_{x}\left(s\right)$ for $s \notin A\left(x\right)$, increase it for all $s \in A\left(x\right)$, and reduce $\alpha_{x}$ such that the belief goes up to $x$ for all $s \in A\left(x\right)$. This increases $p_\theta\left(x\right)$, implying that the original strategy was not optimal.

Therefore, the sender sets $\alpha_{x}$ such that $\tilde{f}_{x}\left(s\right)=0$ for all $s<s^{*}\left(x\right)$, and $\int_{s^{*}\left(x\right)}^{\infty}\tilde{f}_{x}\left(s\right)ds=1$. Consequently, $\alpha_{x}^*$ is given by

\begin{equation}
\label{eq:alpha_x>1/2}
\int_{s^{*}\left(x\right)}^{\infty}\frac{1-\alpha_{x}^*}{\alpha_{x}^*}\kappa^{\prime}\left(s\right)ds=1
\iff  \alpha_{x}^*=\frac{\kappa\left[s^{*}\left(x\right)\right]+1}{\kappa\left[s^{*}\left(x\right)\right]}.%=\frac{xF_{0}\left(s\right)-\left(1-x\right)F_{1}\left(s\right)-\left(2x-1\right)}{xF_{0}\left(s\right)-\left(1-x\right)F_{1}\left(s\right)},
\end{equation}

To prove the statement about the vote shares, note that each voter with type $x\leq\frac{1}{2}$ votes for the government with probability one. The mass of such voters is $H\left(\frac{1}{2}\right)$. A voter with type $x\in\left(\frac{1}{2},1\right]$ votes for the government if and only if she receives a signal $s\ge s^{*}\left(x\right)$. In state $\theta\in\left\{ 0,1\right\} $, the probability of this is
\begin{equation}\label{eq:vote share x}
V_{\theta}(x):=\alpha_{x}^{*}+\left(1-\alpha_{x}^{*}\right)\left(1-F_{\theta}\left[s^{*}\left(x\right)\right]\right)=\frac{xF_{0}\left(s\right)-\left(1-x\right)F_{1}\left(s\right)-\left(2x-1\right)F_{\theta}\left[s^{*}\left(x\right)\right]}{xF_{0}\left(s\right)-\left(1-x\right)F_{1}\left(s\right)}.
\end{equation}
Hence, the overall vote share of the government in state $\theta$ is
\begin{equation}\label{eq:vote share}
V_\theta =H\left(\frac{1}{2}\right)+\int_{\frac{1}{2}}^{1}V_\theta(x)dH\left(x\right).    
\end{equation}
Note that $F_{0}\left(s\right)>F_{1}\left(s\right)$, because MLRP implies first-order stochastic dominance. Hence, the denominator of \eqref{eq:vote share x} is positive for all $x>\frac{1}{2}$, and  thus $V_{1}>V_{0}\geq H\left(\frac{1}{2}\right)$.

The result that $V_{\theta}>V^{NT}_{\theta}$ for each $\theta$ follows from the fact that without trolls, in state $\theta$ a voter with type $x\in \left(0,1\right)$ votes for the government with probability $1-F_{\theta}\left[s^{*}\left(x\right)\right]$. With trolls, a pro-government voter votes for the government with probability 1, while an anti-government voter votes for the government with probability $V_{\theta}(x)>1-F_{\theta}\left[s^{*}\left(x\right)\right]$.
\qed

\subsection{Proof of Corollary \ref{cor:dadx}}
The derivative of $\alpha_x^*$ for  $x<\ot$ as defined in \eqref{eq:alpha_x<1/2} is:
\[
%\begin{array}{rcl}
\left.\dfrac{\partial \alpha_x^*}{\partial x}\right|_{x<\ot}
%=\frac{\kappa\left[s^*(x)\right]+1}%{\kappa\left[s^*(x)\right]}
%%=\dfrac{\kappa'\left[s^*(x)\right]\frac{ds^*(x)}{dx}\kappa\left[s^*(x)\right]-\left(\kappa\left[s^*(x)\right]+1\right)\kappa'\left[s^*(x)\right]\frac{ds^*(x)}{dx}}{\left(\kappa\left[s^*(x)\right]\right)^2}
%\dfrac{(2 x-1) \frac{ds^*(x)}{dx} \left[x %f_0(s^*(x))-(1-x)
 %  f_1(s^*(x))\right]+F_1(s^*(x))-F_0(s^*(x))}%{(x F_0(s^*(x))-(1-x)
   %F_1(s^*(x)))^2}
%%=-\dfrac{\kappa'\left[s^*(x)\right]\frac{ds^*(x)}{dx}}{\left(\kappa\left[s^*(x)\right]\right)^2}
%\end{array}
=\frac{-(2 x-1) \frac{ds^*(x)}{dx} \left[x {f_0}(s^*(x))-(1-x)
   {f_1}(s^*(x))\right]+{F_0}(s^*(x))-{F_1}(s^*(x))}{(x
   {F_0}(s^*(x))+(x-1) {F_1}(s^*(x))-2 x+1)^2}.
\]
First note that $m(s^*(x))=\frac{x}{1-x}$, which implies $\left[x {f_0}(s^*(x))-(1-x)
   {f_1}(s^*(x))\right]=0$. Moreover, the denominator of the derivative is strictly positive. Hence, $\left.\frac{\partial \alpha_x^*}{\partial x}\right|_{x<\ot}>0\iff {F_0}(s^*(x))-{F_1}(s^*(x))>0$, which is true because of the  MLRP. This proves this part of the result.

The derivative of $\alpha_x^*$  for  $x>\ot$ as defined in \eqref{eq:alpha_x>1/2} is:
\[
%\begin{array}{rcl}
\left.\dfrac{\partial \alpha_x^*}{\partial x}\right|_{x>\ot}
%=\frac{\kappa\left[s^*(x)\right]}%{1+\kappa\left[s^*(x)\right]}
=\frac{(2 x-1) \frac{ds^*(x)}{dx} \left[x {f_0}(s^*(x))-(1-x)
   {f_1}(s^*(x))\right]+{F_1}(s^*(x))-{F_0}(s^*(x))}{(x
   {F_0}(s^*(x))+(x-1) {F_1}(s^*(x)))^2}.
\]
The expression in brackets is again equal to zero, and the denominator again strictly positive. 
 Hence, $\left.\frac{\partial \alpha_x^*}{\partial x}\right|_{x>\ot}<0\iff {F_1}(s^*(x))-{F_0}(s^*(x))<0$, which again follows from the MLRP. This proves the second part of the corollary.\qed

\subsection{Proof of Theorem \ref{prop: informativeness_Blackwell}}

Using \eqref{eq:vote share x}  and \eqref{eq:vote share}, we obtain
\begin{equation}\label{eq:V0}
V_{0}=H\left(\frac{1}{2}\right)+\int_{\frac{1}{2}}^{1}\frac{\frac{F_{0}\left[s^{*}\left(x\right)\right]}{F_{1}\left[s^{*}\left(x\right)\right]}-1}{\frac{x}{1-x}\frac{F_{0}\left[s^{*}\left(x\right)\right]}{F_{1}\left[s^{*}\left(x\right)\right]}-1}dH\left(x\right),
\end{equation}
and
\begin{equation}\label{eq:V1}
V_{1}=H\left(\frac{1}{2}\right)+\int_{\frac{1}{2}}^{1}\frac{\frac{F_{0}\left[s^{*}\left(x\right)\right]}{F_{1}\left[s^{*}\left(x\right)\right]}-1}{\frac{F_{0}\left[s^{*}\left(x\right)\right]}{F_{1}\left[s^{*}\left(x\right)\right]}-\frac{1-x}{x}}dH\left(x\right).
\end{equation}
An increase in informativeness implies that $\frac{F_{0}\left[s^{*}\left(x\right)\right]}{F_{1}\left[s^{*}\left(x\right)\right]}$ increases for all $x\in\left[\frac{1}{2},1\right]$. Therefore,  both $V_{0}$ and $V_{1}$ increase.\qed

\subsection{Proof of Theorem \ref{prop: polarisation}}
Let 
\[
\begin{array}{ccc}
\tilde{V}_{\theta}(x):=\begin{cases}
1 & \text{ if }x\leq\frac{1}{2}\\
V_\theta(x) & \text{ if }x>\frac{1}{2}
\end{cases}&\text{ and } &
\tau\left[H\left(x\right)\right]:=\begin{cases}
H\left(\frac{1}{2}\right) & \text{ if }x\leq\frac{1}{2}\\
H\left(x\right) & \text{ if }x>\frac{1}{2}
\end{cases},
\end{array}
\]
where $V_\theta(x)$ is defined by \eqref{eq:vote share x}. 
%Let also $\tau\left[H\left(x\right)\right]:=\begin{cases}
%H\left(\frac{1}{2}\right) & \text{ if }x\leq\frac{1}{2}\\
%H\left(x\right) & \text{ if }x>\frac{1}{2}
%\end{cases}$. 
Using \eqref{eq:vote share} we have $V_{\theta}=\int_{0}^{1}\tilde{V}_{\theta}(x)d\tau\left[H\left(x\right)\right].$ By Definition \ref{def: polarisation}, $\tau[\hat{H}]$ first-order stochastically dominates $\tau[H]$. Furthermore, rewriting \eqref{eq:vote share x}, we have $V_\theta(x)=1-\left(1-\alpha_{x}^{*}\right)F_{\theta}\left[s^{*}\left(x\right)\right]$. By Corollary \ref{cor:dadx}, $\alpha_{x}^{*}$ is decreasing in $x$, while $s^{*}\left(x\right)$ is increasing in $x$. Therefore, $V_\theta(x)$, and hence also $\tilde{V}_{x}^{\theta}$, is decreasing in $x$. Stochastic dominance then implies that $V_{\theta}$ is smaller under $\hat{H}$ than under $H$ for $\theta\in\left\{ 0,1\right\} $. \qed

\subsection{Proof of Proposition \ref{prop: informativeness}}

Recall that $V_0$ and $V_1$ are defined by \eqref{eq:V0} and  \eqref{eq:V1}, and $V_{0}<V_{1}$. Also, by Theorem \ref{prop: informativeness_Blackwell},  $V_{0}$ and $V_{1}$ are increasing with $r$.
%Suppose $H\left(\frac{1}{2}\right)<\frac{1}{2}$. Take a sequence $\left(F_{0}^{r},F_{1}^{r}\right)$ such that informativeness increases with $r$, $\lim_{r\rightarrow-\infty}\frac{F_{0}^{r}\left(s\right)}{F_{1}^{r}\left(s\right)}\rightarrow1$ for all $s$, and $\lim_{r\rightarrow\infty}\frac{F_{0}^{r}\left(s\right)}{F_{1}^{r}\left(s\right)}\rightarrow\infty$ for all $s$. 
We have $
\lim_{r\rightarrow-\infty}V_{0}=\lim_{r\rightarrow-\infty}V_{1}=H\left(\frac{1}{2}\right)<\frac{1}{2}$,
hence the government loses the election in both states when $r$ is sufficiently small. Furthermore,
\[
\lim_{r\rightarrow\infty}V_{1}=H\left(\frac{1}{2}\right)+\int_{\frac{1}{2}}^{1}dH\left(x\right)=1,
\]
hence the government wins the election in state 1 when $r$ is sufficiently high. By monotonicity, there exists $r^{\prime}$ such that the government loses the election state 1 if and only if $r<r^{\prime}$. At the same time, 
\begin{align*}
\lim_{r\rightarrow\infty}V_{0}= & H\left(\frac{1}{2}\right)+\int_{\frac{1}{2}}^{1}\frac{1-x}{x}dH\left(x\right)\\
= & H\left(\frac{1}{2}\right)+\left[\frac{1-x}{x}H\left(x\right)\right]_{\frac{1}{2}}^{1}+\int_{\frac{1}{2}}^{1}\frac{1}{x^{2}}H\left(x\right)dx
=  \int_{\frac{1}{2}}^{1}\frac{1}{x^{2}}H\left(x\right)dx.
\end{align*}
If polarisation is low enough, then $H\left(x\right)$ is close to 1 for almost all $x\in\left(\frac{1}{2},1\right)$. Hence, $\lim_{r\rightarrow\infty}V_{0}$ is close to $\int_{\frac{1}{2}}^{1}\frac{1}{x^{2}}dx=1>\frac{1}{2}$. Therefore, the government wins the election in state 0 when $r$ is sufficiently high. By monotonicity, there exists $r^{\prime\prime}$ such that the government loses the election in state 0 for $r<r^{\prime\prime}$, and wins the election in state 0 if and only if $r>r^{\prime\prime}$. The fact that $V_{0}<V_{1}$ implies that  $r^{\prime}<r^{\prime\prime}$. Hence, when  $r\in\left(r^{\prime},r^{\prime\prime}\right)$, the government wins the election only in state 1.

If polarisation is high enough, $H\left(x\right)$ is close to $H\left(\frac{1}{2}\right)$ for almost all $x\in\left(\frac{1}{2},1\right)$. Hence, $\lim_{r\rightarrow\infty}V_{0}$ is close to $H\left(\frac{1}{2}\right)\int_{\frac{1}{2}}^{1}\frac{1}{x^{2}}dx=H\left(\frac{1}{2}\right)<\frac{1}{2}$. Therefore, the government loses the election in state 0 for all $r$. 
By  Theorem \ref{prop: polarisation}, a change from $H$ to $\hat{H}$ which dominates $H$ in polarisation decreases $V_{0}$. Thus, for  $r>r^{\prime\prime}$, the government wins the election in both states if polarisation is sufficiently low, and wins the election in state 1 only otherwise.\qed

\subsection{Proof of Corollary \ref{cor:trolls-efficiency}}

By Proposition \ref{prop: informativeness},  there exists an information structure such that $V_1=\frac{1}{2}>V_0$. By Proposition \ref{lem: strategy}, without trolls under this information structure, $\frac{1}{2}>V^{NT}_1>V^{NT}_0$.\qed

\subsection{Proof of Proposition \ref{prop:costly trolls}}
To prove the proposition we proceed in two steps. We first derive the optimal slant $\tilde{f}_x(s)$ of the trolls' messages for a given intensity $\alpha_x$. 
%We call this part \textit{limited reach}, as it refers to a situation in which the sender is exogenously constrained. 
We then use these results to find the optimal intensity $\bar{\alpha}_x^*$ when trolls are costly.

\subsubsection{Optimal Slant}

%If the constraint does not bind, that is, if $\alpha_{x}$ is greater than $\alpha_{x}^*$ as defined in Proposition \ref{lem: strategy}, then the strategy defined in Proposition \ref{lem: strategy} remains optimal. For the rest of the proof, consider the case when  $\alpha_{x}<\alpha_{x}^*$.
Recall from \eqref{eq:belief} that after observing a signal $s$, the voter forms a posterior belief 
\[
\pi_x\left(s\right)=\frac{1}{1+\frac{\left(1-\alpha_{x}\right)f_{0}\left(s\right)+\alpha_{x}\tilde{f}_{x}\left(s\right)}{\left(1-\alpha_{x}\right)f_{1}\left(s\right)+\alpha_{x}\tilde{f}_{x}\left(s\right)}}.
\]
As in the proof of Proposition \ref{lem: strategy}, for a voter with type $x$, let $A\left(x\right)\subseteq\mathbb{R}$ be the set of signals such that this voter votes for the government if and only if she receives a signal $s\in A\left(x\right)$. Therefore, $\pi_x(s)\geq x$ if and only if $s\in A\left(x\right)$.

\paragraph{Anti-government voters.}
First, take a voter with type $x\in\left(\frac{1}{2},1\right]$. It follows again from   Lemma \ref{lem:A(x)_IF_x>1/2} that at the optimum  $A\left(x\right)=\left[s^{*}\left(x\right),\infty\right)$. 

% Because $\pi_x(s)\geq x, \forall s\in A\left(x\right)$, it follows that for all $s\in A\left(x\right)$ we have

% \begin{equation}
% \label{eq:density}
%      \tilde{f}_{x}\left(s\right)\leq
%      \frac{1-\alpha_{x}}{\alpha_{x}}\frac{x f_{0}\left(s\right)-\left(1-x\right)f_{1}\left(s\right)}{1-2x}
% \end{equation}

Let $p_\theta\left(x\right)$ be the probability that the voter votes for the government in state $\theta\in\left\{ 0,1\right\} $. It equals  the probability that she observes a signal $s\in A\left(x\right)$. Thus,
\[
\begin{array}{rcl}
p_\theta\left(x\right)&=&\int_{s\in A\left(x\right)}\left[\left(1-\alpha_{x}\right)f_{\theta}\left(s\right)+\alpha_{x}\tilde{f}_{x}\left(s\right)\right]ds\\
&=&(1-\alpha_x)\left(1-F_\theta\left[s^*(x)\right]\right)+\alpha_{x}\left(1-\tilde{F}_{x}\left[s^*(x)\right]\right).
\end{array}
\]
This is decreasing in $\tilde{F}_{x}\left[s^*(x)\right]$. Hence, at the optimum, $\tilde{F}_{x}\left[s^*(x)\right]=0$.

\paragraph{Pro-government voters.}
Now take a voter with type $x\in\left[0,\frac{1}{2}\right]$. Recall that the optimal intensity $\alpha_x^*$ as defined in \eqref{eq:alpha_x<1/2} minimizes $\alpha_{x}$ subject to the constraint that for these voters $A\left(x\right)=\mathbb{R}$, i.e., that every pro-government voter votes for the government. If $\alpha_{x}\geq \alpha_{x}^*$, then the sender can likewise induce every voter to vote for the government by setting the support of $\tilde{f}_{x}\left(s\right)$ as in Proposition \ref{lem: strategy}.
If $\alpha_{x}<\alpha_{x}^*$, the constraint must be violated, so $A\left(x\right)\subset\mathbb{R}$. For the rest of the section we focus on this case.

% To see this, recall equation
% \eqref{eq:f-tilde x<0.5} that pins down the condition for $\pi_x(s)=x$:
% \[
% \tilde{f}_{x}\left(s\right)=\frac{1-\alpha_{x}}{\alpha_{x}}\kappa^{\prime}\left(s\right).
% \]
%In the unconstrained case, the sender chooses $\alpha_x$ such that $\tilde{f}_{x}\left(s\right)$ is a density.
%Note that \eqref{eq:f-tilde x<0.5} decreases in $\alpha_x$.
%
%MORE HERE

We now show that at the optimum $\tilde{f}_{x}\left(s\right)=0$ for all $s \in  \left[s^{*}\left(x\right),\infty\right) \bigcup \left( \mathbb{R}\setminus A\left(x\right)\right)$, except possibly for a zero-measure set. If there was a positive-measure set 
\[C\left(x\right)\subseteq \left[s^{*}\left(x\right),\infty\right) \bigcup \left( \mathbb{R}\setminus A\left(x\right)\right)\] such that $\tilde{f}_{x}\left(s\right)>0$ for all $s\in C\left(x\right)$, the sender could move the probability mass from $C\left(x\right)$ to some interval outside $A\left(x\right)$ so that the belief in that interval increased to $x$. This would expand $A\left(x\right)$, increasing the share of voters voting for the government. Following the same reasoning, for all $s<s^{*}\left(x\right)$ we cannot have $\pi_x\left(s\right)=x$. Hence, for all $s<s^{*}\left(x\right)$ such that $s\in A\left(x\right)$ we have
\[
\pi_x\left(s\right)=x \iff \tilde{f}_{x}\left(s\right) = \frac{1-\alpha_{x}}{\alpha_x}\kappa^{\prime}\left(s\right), 
\]
where the last expression follows \eqref{eq:f-tilde x<0.5} and $\kappa(s)$ is  defined in \eqref{eq:kappa}.

We  now determine the support  of $\tilde{f}_x(s)$, which we denote by $\mathcal{S}\left(x\right)$. The preceding analysis implies that $\mathcal{S}\left(x\right)\subset A\left(x\right)$, and $\mathcal{S}\left(x\right)\subset \left(-\infty,s^{*}\left(x\right)\right]$. 
%Note that $\mathcal{S}\left(x\right)$ must have strictly  positive Lebesgue measure, as otherwise the probability that a voter, who does not receive a message from the troll farm, votes for the government is equal to the probability of her receiving a signal $s\geq s^*(x)$, which cannot be optimal.
Take a signal $s$ in the interior of $\mathcal{S}\left(x\right)$, and a $\delta>0$ such that the interval $D:=\left(s-\delta,s+\delta\right)\in \mathcal{S}(x)$.
When $\delta$ is sufficiently small, the mass of trolls' signals in the interval $D$ approximately equals
\[
2\delta\tilde{f}_{x}\left(s\right)=2\delta \frac{1-\alpha_{x}}{\alpha_{x}} \kappa^{\prime}\left(s\right)=2\delta\frac{1-\alpha_{x}}{\alpha_{x}}\frac{xf_{0}\left(s\right)-\left(1-x\right)f_{1}\left(s\right)}{1-2x}.
\]
Now consider the following deviation: remove $D$ from $\mathcal{S}(x)$, and instead move the mass of trolls' signals to an interval around some signal $s^{\prime}\notin \mathcal{S} \left(x\right)$ such that $s^{\prime}<s^*\left(x\right)$, while making sure that $\pi_x\left(s^{\prime}\right)=x$. The latter condition means that $\tilde{f}_{x}\left(s'\right)=\frac{1-\alpha_{x}}{\alpha_{x}}\frac{xf_{0}\left(s'\right)-\left(1-x\right)f_{1}\left(s'\right)}{1-2x}$. The
width of the interval around $s^{\prime}$ that is added to $\mathcal{S}(x)$ as a consequence of this deviation equals approximately
\[
2\delta\frac{\tilde{f}_{x}\left(s\right)}{\tilde{f}_{x}\left(s^{\prime}\right)}=2\delta\frac{xf_{0}\left(s\right)-\left(1-x\right)f_{1}\left(s\right)}{xf_{0}\left(s^{\prime}\right)-\left(1-x\right)f_{1}\left(s^{\prime}\right)}.
\]
In  state $\theta\in\left\{0,1\right\}$, this deviation leads to a change in the probability that a voter of type $x$ votes for the government that approximately equals
\[
\left(1-\alpha_{x}\right)2\delta\frac{xf_{0}\left(s\right)-\left(1-x\right)f_{1}\left(s\right)}{xf_{0}\left(s^{\prime}\right)-\left(1-x\right)f_{1}\left(s^{\prime}\right)}f_{\theta}\left(s^{\prime}\right)-\left(1-\alpha_{x}\right)2\delta f_{\theta}\left(s\right).
\]
The deviation is beneficial if and only if this is positive, i.e., if and only if
\[
% \left(1-\alpha_{x}\right)2\delta\dfrac{xf_{0}\left(s\right)-\left(1-x\right)f_{1}\left(s\right)}{xf_{0}\left(s^{\prime}\right)-\left(1-x\right)f_{1}
 %\left(s^{\prime}\right)}f_{\theta}\left(s^{\prime}\right)-\left(1-\alpha_{x}\right)2\delta f_{\theta}\left(s\right)>0\\
\dfrac{xf_{0}\left(s\right)-\left(1-x\right)f_{1}\left(s\right)}{xf_{0}\left(s^{\prime}\right)-\left(1-x\right)f_{1}\left(s^{\prime}\right)}>\dfrac{f_{\theta}\left(s\right)}{f_{\theta}\left(s^{\prime}\right)}.
\]
\noindent
In state $\theta=0$ this is satisfied if and only if
\[
  \frac{xf_{0}\left(s\right)-\left(1-x\right)f_{1}\left(s\right)}{xf_{0}\left(s^{\prime}\right)-\left(1-x\right)f_{1}\left(s^{\prime}\right)}>\frac{f_{0}\left(s\right)}{f_{0}\left(s^{\prime}\right)}
\iff \frac{f_{1}\left(s^{\prime}\right)}{f_{0}\left(s^{\prime}\right)}> \frac{f_{1}\left(s\right)}{f_{0}\left(s\right)}
\iff  s'>s,
\]
\noindent
while in state $\theta=1$ it has to hold that
\[
  \frac{xf_{0}\left(s\right)-\left(1-x\right)f_{1}\left(s\right)}{xf_{0}\left(s^{\prime}\right)-\left(1-x\right)f_{1}\left(s^{\prime}\right)}>\frac{f_{1}\left(s\right)}{f_{1}\left(s^{\prime}\right)}
\iff  \frac{f_{1}\left(s^{\prime}\right)}{f_{0}\left(s^{\prime}\right)}>\frac{f_{1}\left(s\right)}{f_{0}\left(s\right)}
\iff  s^{\prime}>s.
\]
\noindent
Hence, in each state $\theta$ a deviation is profitable if and only if there is some $s'<s^*\left(x\right)$ that is larger than some $s\in \mathcal{S}(x)$. This in turn implies that there exists no beneficial deviation if and only if $\mathcal{S}(x)=\left[\hat{s}\left(x\right),s^{*}\left(x\right)\right]$ for some $\hat{s}\left(x\right)<s^{*}\left(x\right)$, 
%Therefore, letting the superscript ``$\text{C}$'' denote the optimal choice for a constrained sender,  when $x<\ot$  and $\alpha_x<\alpha_x^*$, the sender chooses $\alpha_x^\text{C}=\alpha_x$ and 
% \[
% \tilde{f}_{x}^\text{C}\left(s\right)=\begin{cases}
% \frac{1-\alpha_{x}}{\alpha_{x}}\kappa^{\prime}\left(s\right) & \forall s\in \mathcal{S}(x)\\
% 0 & \forall s\notin \mathcal{S}\left(x\right),
% \end{cases}
% \]
where $\hat{s}\left(x\right)$ solves
$\int_{\hat{s}\left(x\right)}^{s^{*}\left(x\right)}\frac{1-\alpha_{x}}{\alpha_{x}}\kappa^{\prime}\left(s\right)ds=1$.
Further, to see that $\hat{s}\left(x\right)$ always exists and is unique, note that when $\hat{s}\left(x\right)\rightarrow s^*(x)$, the integral approaches zero. At the same time, as $\hat{s}\left(x\right)\rightarrow -\infty$, the integral becomes strictly greater than one because $\alpha_x<\alpha_x^*$ and the integral equals one when $\alpha_x=\alpha_x^*$. Existence of  $\hat{s}(x)$ then follows because the integral is continuous in $\hat{s}(x)$. Uniqueness follows from the fact that $\kappa^{\prime}\left(s\right)>0$ for all $s<s^*\left(x\right)$, and hence the integral is strictly monotone in $\hat{s}\left(x\right)$.

\subsubsection{Optimal Intensity}

%Recall from Proposition \ref{prop:costly trolls} that for each $x$, the cost of trolls is $c(\alpha_x)$. We first study optimal costly trolls for anti-government voters.

\paragraph{Anti-Government Voters:}
We know from the analysis of optimal slant that voters with $x\geq\ot$ vote for the government if and only if they receive a signal $s \geq s^*(x)$.
%even if $\alpha_x < \alpha^*_x$, then it is  possible to choose  $\tilde{f}_x$  such that $\pi_x(s)\geq x$ for all $s \geq s^*(x)$.
%That means that for  $ \alpha_x\leq \alpha^*_x$, the share of voters voting for the government, conditional on receiving a signal that \textit{could} have come from a troll, is 100 percent. 
Thus, the government's  vote share among these voters   in state $\theta$ is
$V_\theta(x)=\alpha_x +(1-\alpha_x) \left(1-F_\theta \left(s^*(x)\right)\right).$
The sender's expected utility is then 
\[\frac{1}{2}\left(\int_{0}^1V_0(x) dH(x) +\int_{0}^1V_1(x) dH(x)\right) -h(x)\,c\left(\alpha_x\right).
\]
The first derivative of expected utility with respect to $\alpha_x$ is
\begin{equation}
\frac{1}{2} h(x)\left(F_0\left(s^*(x)\right)+F_1\left(s^*(x)\right)\right)-h(x)c'\left(\alpha_x\right).
\label{eq:FOC_cost_AG}  
\end{equation}
In an interior equilibrium, this has to equal zero. Note that 
$h(x)$ does not affect the optimal strategy. 
%Moreover, the marginal benefit is constant and thus not depending on $\alpha_x$.
Because the optimal intensity cannot exceed $\alpha^*_x$ as defined \eqref{eq:alpha_x>1/2}, the optimal $\alpha_x$ can be of three kinds: (i) zero, (ii) $\alpha^*_x$, or (iii) interior $\alpha_x\in(0,\alpha^*_x)$.
In an interior solution, we have
\[
\bar{\alpha}_x^*=c'^{-1}\left(\frac{1}{2}F_0(s^*(x))+F_1(s^*(x))\right).
\]
This solution constitutes the unique optimum if it belongs to the $\left(0,\alpha_x^*\right)$ interval. 
Otherwise, we are in one of the two corner solutions.

Note that because 
\begin{equation}
\frac{1}{2} \left(f_0\left(s^*(x)\right)+f_1\left(s^*(x)\right)\right)\frac{\partial s^*(x)}{\partial x}>0,
\label{eq:MBEN}
\end{equation}
the marginal benefit of troll is increasing in $x$ for all $x>\ot$. Moreover, note that $\alpha_{x=1}^*=0$. 
Hence, if $\bar{\alpha}_
{x'}^*>0$ for some $x'>\frac{1}{2}$, then for every $x\in(x',1)$ also $\bar{\alpha}_x^*>0$. Moreover, if $\bar{\alpha}_
{x''}^*=0$ for some $x''>\frac{1}{2}$, then for every $x\in(\frac{1}{2},x'')$ also $\bar{\alpha}_x^*=0$. Hence, we can conclude that there exists $\hat{x}\in[\frac{1}{2},1]$ such that $\bar{\alpha}_x^*>0$ iff $x\in[\hat{x},1)$.

Moreover, note that if $c'$ is low enough  but positive for all $\alpha_x\in[0,\alpha_x^*]$ and $x\geq\ot$, then \eqref{eq:FOC_cost_AG} is positive for all $\alpha_x\in[0,\alpha_x^*]$. It follows that in this case  we have $\bar{\alpha}_x^*=\alpha_x^*$ for all $x\geq\ot$.

\paragraph{Pro-Government Voters:}
Next consider pro-government voters with $x\leq \ot$. We know from the analysis of optimal slant that for any $\alpha_x$ below $\alpha^*_x$ as defined in \eqref{eq:alpha_x<1/2}, the optimal strategy is to emulate only signals $s\in[\hat{s}(x,\alpha_x),s^*(x)]$, where $\hat{s}\left(x,\alpha_x\right)$ solves
$\int_{\hat{s}(x,\alpha_x)}^{s^{*}\left(x\right)}\kappa^{\prime}\left(s\right)ds=\frac{\alpha_x}{1-\alpha_x}$. The derivative of this condition with respect to $\alpha_x$ is $-\frac{\partial \hat{s}(x,\alpha_x)}{\partial \alpha_x}\kappa'(\hat{s}(x,\alpha_x))=\frac{1}{(1-\alpha_x)^2},$
and hence, 
\begin{equation}
\label{eq:dsda}
\frac{\partial \hat{s}(x,\alpha_x)}{\partial \alpha_x}=-\frac{1}{\kappa'(\hat{s}(x,\alpha_x))(1-\alpha_x)^2}<0.
\end{equation}
Greater  $\alpha_x$ means  a greater set of signals is emulated by the sender. This will be important to determine the optimal $\alpha_x$ for pro-government voters.

Given optimal $\tilde{f}_x(s)$, each voter receiving a signal $s$ that \textit{could} have been sent by a troll votes for the government. Therefore, the share of voters of type $x$ voting for the government in  state $\theta$ is
\[
V_\theta(x)=\alpha_x+(1-\alpha_x)(1-F_\theta(\hat{s}(x,\alpha_x))).
\]
%Define
%\[
%\tilde{c}_\alpha\equiv\frac{1}{2} \left(F_0\left(\hat{s}%(x,\alpha)\right)+F_1\left(\hat{s}(x,\alpha)\right)\right)
%\]
Expected utility is defined in analogy to the case of anti-government voters.
Then the first derivative with respect to $\alpha_x$ is
\[
\frac{1}{2} h(x)\left(F_0(\hat{s}(x,\alpha_x ))+F_1(\hat{s}(x,\alpha_x ))-(1-\alpha_x)
   \frac{\partial \hat{s}(x,\alpha_x)}{\partial \alpha_x} \left(f_1(\hat{s}(x,\alpha_x
   ))+f_0(\hat{s}(x,\alpha_x
   ))\right)\right)-h(x)c_x'(\alpha_x).
\]
Again, $h(x)$ does not affect the  sender's optimal strategy. Using \eqref{eq:dsda} and ignoring $h(x)$, the marginal benefit of trolls becomes
\[
\frac{1}{2} \left(F_0(\hat{s}(x,\alpha_x))+F_1(\hat{s}(x,\alpha_x))+\frac{f_0(\hat{s}(x,\alpha_x))+f_1(\hat{s}(x,\alpha_x))}{(1-\alpha_x) \kappa '(\hat{s}(x,\alpha_x))}\right).
\]
This is strictly positive for all $\alpha_x\leq \alpha_x^*$. Recall that
\[
\kappa'(s)=\frac{xf_{0}\left(s\right)-\left(1-x\right)f_{1}\left(s\right)}{1-2x},
\]
and thus $\left.\kappa'(s)\right|_{s<s^*(x)\wedge x<\frac{1}{2}}>0$, and $\left.\kappa'(s)\right|_{s=s^*(x)\wedge x<\frac{1}{2}}=0$. Hence, as  $\alpha_x$ approaches zero, $\lim_{\alpha_x\rightarrow 0^+}\kappa'(\hat{s}(x,\alpha_x))=0^+$. Therefore, the marginal benefit of trolls becomes infinite as $\alpha_x\rightarrow 0^+$.  Consequently, any $x<\ot$ will be targeted by the sender with $\bar{\alpha}_x^*>0$ because $c_x'(0)$ is finite. Because it is unclear how  the marginal benefit of trolls behaves generally, the exact optimum is not clear without further assumptions. 

Moreover, note that for low enough but positive $c'$ for all $\alpha_x$,  for all $x\leq\ot$  we have $\bar{\alpha}_x^*=\alpha_x^*$. This follows from the marginal benefit being \textit{strictly} positive for all $\alpha_x\in[0,\alpha_x^*]$.\qed

\subsection{Proof of Lemma \ref{lem:garblings}}

Take any voter type $x$. Each information structure  $\left(F,G\right)$ induces a probability that this voter votes for the government. Let $M_{x}$ be the set of these probabilities under different information structures $\left(F,G\right)$. 
%Since the set of states is binary, it follows from \cite{KamenicaGentzkow:2011} that any element of $M_{x}$ can also obtained under binary signals.
Since the set of states is binary, it follows from \cite{KamenicaGentzkow:2011} that the largest and the smallest elements of $M_{x}$ can also obtained under binary signals. Convexity of $M_{x}$ then implies that any payoff that can be obtained under an arbitrary $\left(F,G\right)$ can also be obtained under a binary signal. 
Hence, it is sufficient to prove the result for binary signals $s$.

%Take any voter type $x$. Note that since both the set of states and the set of actions are binary, the payoff that the sender obtains under $\left(F,G\right)$ can be also obtained when signal $s$ is binary. Hence, it is sufficient to prove the result for binary signals $s$.

Pick any voter, and take any $G$. Let $s\in\left\{ 0,1\right\} $, and let $p_{\theta}$ be the probability that $s=1$ when the state is $\theta$. Take a garbling of signal $s$ that transforms it to a different binary signal $s^{\prime}\in\left\{ 0,1\right\} $. Let $q_{s}:=\Pr\left(s^{\prime}=1\mid s\right)$ be the probability that, under this garbling, signal $s\in\left\{ 0,1\right\} $ is transformed into signal $s^{\prime}=1$.

Take a given realisation of the non-replaceable signal $\rho$. Under garbling, let $\pi^{G}\left(s^{\prime}=1,\rho\right)$ and $\pi^{G}\left(s^{\prime}=0,\rho\right)$ be the voter's posterior beliefs when she observers, respectively, signals $s^{\prime}=1$ and $s^{\prime}=0$, as well as $\rho$. These beliefs are:
\[\begin{array}{c}
\pi^{G}\left(s^{\prime}=1,\rho\right)=\frac{p_{1}q_{1}g_{1}\left(\rho\right)+\left(1-p_{1}\right)q_{0}g_{1}\left(\rho\right)}{p_{1}q_{1}g_{1}\left(\rho\right)+\left(1-p_{1}\right)q_{0}g_{1}\left(\rho\right)+p_{0}q_{1}g_{0}\left(\rho\right)+\left(1-p_{0}\right)q_{0}g_{0}\left(\rho\right)}
=\dfrac{\frac{p_{1}q_{1}+\left(1-p_{1}\right)q_{0}}{p_{0}q_{1}+\left(1-p_{0}\right)q_{0}}\frac{g_{1}\left(\rho\right)}{g_{0}\left(\rho\right)}}{\frac{p_{1}q_{1}+\left(1-p_{1}\right)q_{0}}{p_{0}q_{1}+\left(1-p_{0}\right)q_{0}}\frac{g_{1}\left(\rho\right)}{g_{0}\left(\rho\right)}+1},
\end{array}
\]
and 
\[
\begin{array}{c}
\pi^{G}\left(s^{\prime}=0,\rho\right)=\frac{p_{1}\left(1-q_{1}\right)g_{1}\left(\rho\right)+\left(1-p_{1}\right)\left(1-q_{0}\right)g_{1}\left(\rho\right)}{p_{1}\left(1-q_{1}\right)g_{1}\left(\rho\right)+\left(1-p_{1}\right)\left(1-q_{0}\right)g_{1}\left(\rho\right)+p_{0}\left(1-q_{1}\right)g_{0}\left(\rho\right)+\left(1-p_{0}\right)\left(1-q_{0}\right)g_{0}\left(\rho\right)}
=\dfrac{\frac{\left(1-q_{0}\right)-p_{1}\left(q_{1}-q_{0}\right)}{\left(1-q_{0}\right)+p_{0}\left(q_{1}-q_{0}\right)}\frac{g_{1}\left(\rho\right)}{g_{0}\left(\rho\right)}}{\frac{\left(1-q_{0}\right)-p_{1}\left(q_{1}-q_{0}\right)}{\left(1-q_{0}\right)+p_{0}\left(q_{1}-q_{0}\right)}\frac{g_{1}\left(\rho\right)}{g_{0}\left(\rho\right)}+1}.
\end{array}
\]
%or
%\[
%\pi^{G}\left(s^{\prime}=0,\rho\right)=\frac{\frac{\left(1-q_{0}\right)-p_{1}\left(q_{1}-q_{0}\right)}{\left(1-q_{0}\right)+p_{0}\left(q_{1}-q_{0}\right)}\frac{g_{1}\left(\rho\right)}{g_{0}\left(\rho\right)}}{\frac{\left(1-q_{0}\right)-p_{1}\left(q_{1}-q_{0}\right)}{\left(1-q_{0}\right)+p_{0}\left(q_{1}-q_{0}\right)}\frac{g_{1}\left(\rho\right)}{g_{0}\left(\rho\right)}+1}
%\]

Suppose instead that the signal is not garbled, but troll farm operates. Take a sender's strategy under which with probability $1-\alpha$ the original signal $s$ is observed, while with probability $\alpha$, signal $s$ is replaced by signal $\tilde{s}$ which equals $1$ with probability $\tilde{p}$ and $0$ with probability $1-\tilde{p}$. For a given realisation of $\rho$, let $\pi^{T}\left(1,\rho\right)$ and $\pi^{T}\left(0,\rho\right)$ be the voter's posterior beliefs when trolls are operating and the voter observes, respectively, signals $s1$ and $0$ as well as signal $\rho$. These beliefs are:
\[
\pi^{T}\left(1,\rho\right)=\frac{\left(1-\alpha\right)p_{1}g_{1}\left(\rho\right)+\alpha\tilde{p}g_{1}\left(\rho\right)}{\left(1-\alpha\right)p_{1}g_{1}\left(\rho\right)+\alpha\tilde{p}g_{1}\left(\rho\right)+\left(1-\alpha\right)p_{0}g_{0}\left(\rho\right)+\alpha\tilde{p}g_{0}\left(\rho\right)}=\frac{\frac{\left(1-\alpha\right)p_{1}+\alpha\tilde{p}}{\left(1-\alpha\right)p_{0}+\alpha\tilde{p}}\frac{g_{1}\left(\rho\right)}{g_{0}\left(\rho\right)}}{\frac{\left(1-\alpha\right)p_{1}+\alpha\tilde{p}}{\left(1-\alpha\right)p_{0}+\alpha\tilde{p}}\frac{g_{1}\left(\rho\right)}{g_{0}\left(\rho\right)}+1},
\]
and
\[\begin{array}{c}
\pi^{T}\left(0,\rho\right)=\frac{\left(1-\alpha\right)\left(1-p_{1}\right)g_{1}\left(\rho\right)+\alpha\left(1-\tilde{p}\right)g_{1}\left(\rho\right)}{\left(1-\alpha\right)\left(1-p_{1}\right)g_{1}\left(\rho\right)+\alpha\left(1-\tilde{p}\right)g_{1}\left(\rho\right)+\left(1-\alpha\right)\left(1-p_{0}\right)g_{0}\left(\rho\right)+\alpha\left(1-\tilde{p}\right)g_{0}\left(\rho\right)}=\frac{\frac{\left(1-\alpha\right)\left(1-p_{1}\right)+\alpha\left(1-\tilde{p}\right)}{\left(1-\alpha\right)\left(1-p_{0}\right)+\alpha\left(1-\tilde{p}\right)}\frac{g_{1}\left(\rho\right)}{g_{0}\left(\rho\right)}}{\frac{\left(1-\alpha\right)\left(1-p_{1}\right)+\alpha\left(1-\tilde{p}\right)}{\left(1-\alpha\right)\left(1-p_{0}\right)+\alpha\left(1-\tilde{p}\right)}\frac{g_{1}\left(\rho\right)}{g_{0}\left(\rho\right)}+1}.
\end{array}
\]

To prove the result, it is sufficient to show that there exists a pair $\left(\alpha,\tilde{p}\right)\in\left[0,1\right]^{2}$ under which the distribution of the voter's posterior beliefs is the same as under garbling. That is, it is sufficient to find $\left(\alpha,\tilde{p}\right)$ such that for each $\rho$, either (i) $\pi^{T}\left(1,\rho\right)=\pi^{G}\left(s^{\prime}=1,\rho\right)$ and $\pi^{T}\left(0,\rho\right)=\pi^{G}\left(s^{\prime}=0,\rho\right)$; or $\pi^{T}\left(1,\rho\right)=\pi^{G}\left(s^{\prime}=0,\rho\right)$ and $\pi^{T}\left(0,\rho\right)=\pi^{G}\left(s^{\prime}=1,\rho\right)$.
Suppose $q_{1}\geq q_{0}$ and let $
\alpha=1-\left(q_{1}-q_{0}\right)\in\left(0,1\right)$
as well as $\tilde{p}=\frac{q_{0}}{\alpha}=\frac{q_{0}}{1-\left(q_{1}-q_{0}\right)}\in\left(0,1\right)$.
Then
\[
%\begin{array}{l}
 \dfrac{\left(1-\alpha\right)p_{1}+\alpha\tilde{p}}{\left(1-\alpha\right)p_{0}+\alpha\tilde{p}}=\dfrac{p_{1}q_{1}+\left(1-p_{1}\right)q_{0}}{p_{0}q_{1}+\left(1-p_{0}\right)q_{0}}
%\iff \dfrac{\frac{\left(1-\alpha\right)p_{1}+\alpha\tilde{p}}{\left(1-\alpha\right)p_{0}+\alpha\tilde{p}}\frac{g_{1}\left(\rho\right)}{g_{0}\left(\rho\right)}}{\frac{\left(1-\alpha\right)p_{1}+\alpha\tilde{p}}{\left(1-\alpha\right)p_{0}+\alpha\tilde{p}}\frac{g_{1}\left(\rho\right)}{g_{0}\left(\rho\right)}+1}=\dfrac{\frac{p_{1}q_{1}+\left(1-p_{1}\right)q_{0}}{p_{0}q_{1}+\left(1-p_{0}\right)q_{0}}\frac{g_{1}\left(\rho\right)}{g_{0}\left(\rho\right)}}{\frac{p_{1}q_{1}+\left(1-p_{1}\right)q_{0}}{p_{0}q_{1}+\left(1-p_{0}\right)q_{0}}\frac{g_{1}\left(\rho\right)}{g_{0}\left(\rho\right)}+1}\\
\iff  \pi^{T}\left(1,\rho\right)=\pi^{G}\left(s^{\prime}=1,\rho\right).
%\end{array}
\]
Furthermore,
\[
%\begin{array}{l}
  \dfrac{\left(1-\alpha\right)\left(1-p_{1}\right)+\alpha\left(1-\tilde{p}\right)}{\left(1-\alpha\right)\left(1-p_{0}\right)+\alpha\left(1-\tilde{p}\right)}=\dfrac{1-q_{0}-p_{1}\left(q_{1}-q_{0}\right)}{1-q_{0}-p_{0}\left(q_{1}-q_{0}\right)}
%\iff \dfrac{\frac{\left(1-\alpha\right)\left(1-p_{1}\right)+\alpha\left(1-\tilde{p}\right)}{\left(1-\alpha\right)\left(1-p_{0}\right)+\alpha\left(1-\tilde{p}\right)}\frac{g_{1}\left(\rho\right)}{g_{0}\left(\rho\right)}}{\frac{\left(1-\alpha\right)\left(1-p_{1}\right)+\alpha\left(1-\tilde{p}\right)}{\left(1-\alpha\right)\left(1-p_{0}\right)+\alpha\left(1-\tilde{p}\right)}\frac{g_{1}\left(\rho\right)}{g_{0}\left(\rho\right)}+1}=\dfrac{\frac{1-q_{0}-p_{1}\left(q_{1}-q_{0}\right)}{1-q_{0}-p_{0}\left(q_{1}-q_{0}\right)}\frac{g_{1}\left(\rho\right)}{g_{0}\left(\rho\right)}}{\frac{1-q_{0}-p_{1}\left(q_{1}-q_{0}\right)}{1-q_{0}-p_{0}\left(q_{1}-q_{0}\right)}\frac{g_{1}\left(\rho\right)}{g_{0}\left(\rho\right)}+1}\\
\iff \pi^{T}\left(0,\rho\right)=\pi^{G}\left(s^{\prime}=0,\rho\right).
%\end{array}
\]
Hence, the distributions of posterior beliefs are the same.
Suppose next that $q_{1}<q_{0}$. Let 
$
\alpha=1-\left(q_{0}-q_{1}\right)\in\left(0,1\right)$
and $\tilde{p}=\frac{1-q_{0}}{1-\left(q_{0}-q_{1}\right)}=\frac{1-q_{0}}{\alpha}\in\left(0,1\right)$. Then
\[
%\begin{array}{l}
 \dfrac{\left(1-\alpha\right)p_{1}+\alpha\tilde{p}}{\left(1-\alpha\right)p_{0}+\alpha\tilde{p}}=\dfrac{1-q_{0}-p_{1}\left(q_{1}-q_{0}\right)}{1-q_{0}+p_{0}\left(q_{1}-q_{0}\right)}
%\iff  \dfrac{\frac{\left(1-\alpha\right)p_{1}+\alpha\tilde{p}}{\left(1-\alpha\right)p_{0}+\alpha\tilde{p}}\frac{g_{1}\left(\rho\right)}{g_{0}\left(\rho\right)}}{\frac{\left(1-\alpha\right)p_{1}+\alpha\tilde{p}}{\left(1-\alpha\right)p_{0}+\alpha\tilde{p}}\frac{g_{1}\left(\rho\right)}{g_{0}\left(\rho\right)}+1}=\dfrac{\frac{1-q_{0}-p_{1}\left(q_{1}-q_{0}\right)}{1-q_{0}+p_{0}\left(q_{1}-q_{0}\right)}\frac{g_{1}\left(\rho\right)}{g_{0}\left(\rho\right)}}{\frac{1-q_{0}-p_{1}\left(q_{1}-q_{0}\right)}{1-q_{0}+p_{0}\left(q_{1}-q_{0}\right)}\frac{g_{1}\left(\rho\right)}{g_{0}\left(\rho\right)}+1}\\
\iff  \pi^{T}\left(0,\rho\right)=\pi^{G}\left(s^{\prime}=0,\rho\right).
%\end{array}
\]
\begin{comment}
\[
\begin{array}{rcl}
  \dfrac{\left(1-\alpha\right)p_{1}+\alpha\tilde{p}}{\left(1-\alpha\right)p_{0}+\alpha\tilde{p}}&=&\dfrac{\left(q_{0}-q_{1}\right)p_{1}+1-q_{0}}{\left(q_{0}-q_{1}\right)p_{0}+1-q_{0}}=\dfrac{1-q_{0}-p_{1}\left(q_{1}-q_{0}\right)}{1-q_{0}+p_{0}\left(q_{1}-q_{0}\right)}\\
\iff  \dfrac{\frac{p_{1}q_{1}+\left(1-p_{1}\right)q_{0}}{p_{0}q_{1}+\left(1-p_{0}\right)q_{0}}\frac{g_{1}\left(\rho\right)}{g_{0}\left(\rho\right)}}{\frac{p_{1}q_{1}+\left(1-p_{1}\right)q_{0}}{p_{0}q_{1}+\left(1-p_{0}\right)q_{0}}\frac{g_{1}\left(\rho\right)}{g_{0}\left(\rho\right)}+1}&=&\dfrac{\frac{1-q_{0}-p_{1}\left(q_{1}-q_{0}\right)}{1-q_{0}+p_{0}\left(q_{1}-q_{0}\right)}\frac{g_{1}\left(\rho\right)}{g_{0}\left(\rho\right)}}{\frac{1-q_{0}-p_{1}\left(q_{1}-q_{0}\right)}{1-q_{0}+p_{0}\left(q_{1}-q_{0}\right)}\frac{g_{1}\left(\rho\right)}{g_{0}\left(\rho\right)}+1}\\
\iff  \pi^{T}\left(1,\rho\right)&=&\pi^{G}\left(s^{\prime}=0,\rho\right).
\end{array}
\]
\end{comment}
Furthermore,
\[
%\begin{array}{l}
 \dfrac{\left(1-\alpha\right)\left(1-p_{1}\right)+\alpha\left(1-\tilde{p}\right)}{\left(1-\alpha\right)\left(1-p_{0}\right)+\alpha\left(1-\tilde{p}\right)}=\dfrac{p_{1}q_{1}+\left(1-p_{1}\right)q_{0}}{p_{0}q_{1}+\left(1-p_{0}\right)q_{0}}
%\iff  \dfrac{\frac{\left(1-\alpha\right)\left(1-p_{1}\right)+\alpha\left(1-\tilde{p}\right)}{\left(1-\alpha\right)\left(1-p_{0}\right)+\alpha\left(1-\tilde{p}\right)}\frac{g_{1}\left(\rho\right)}{g_{0}\left(\rho\right)}}{\frac{\left(1-\alpha\right)\left(1-p_{1}\right)+\alpha\left(1-\tilde{p}\right)}{\left(1-\alpha\right)\left(1-p_{0}\right)+\alpha\left(1-\tilde{p}\right)}\frac{g_{1}\left(\rho\right)}{g_{0}\left(\rho\right)}+1}=\dfrac{\frac{p_{1}q_{1}+\left(1-p_{1}\right)q_{0}}{p_{0}q_{1}+\left(1-p_{0}\right)q_{0}}\frac{g_{1}\left(\rho\right)}{g_{0}\left(\rho\right)}}{\frac{p_{1}q_{1}+\left(1-p_{1}\right)q_{0}}{p_{0}q_{1}+\left(1-p_{0}\right)q_{0}}\frac{g_{1}\left(\rho\right)}{g_{0}\left(\rho\right)}+1}\\
\iff \pi^{T}\left(0,\rho\right)=\pi^{G}\left(s^{\prime}=1,\rho\right).
%\end{array}
\]
Hence, the distributions of posterior beliefs are again the same.\qed

\subsection{Proof of Proposition \ref{prop:garblings_2_signals}}

%Every strategy involving trolls induces an information structure $\left(\hat{F},G\right)$, where $\hat{F}$ is a garbling of $F$. Furthermore, by Lemma \ref{lem:garblings}, the sender's payoff from any such information structure can be obtained using trolls on the initial information structure $\left(F,G\right)$. Hence, the sender's choice is equivalent to selecting a garbling of $F$. 

As before, let $W\left(F,G\right)$ be the sender's expected payoff under information structure $\left(F,G\right)$.
Let $M\left(F\right)$ be the set of all garblings of $F$. Then the sender's equilibrium expected payoff is $\max_{\hat{F}\in M\left(F\right)}W\left(\hat{F},G\right).$ Suppose $F$ becomes more informative, that is, is replaced by $F^{\prime}$ such that $F$ is a garbling of $F^{\prime}$. Then $M\left(F\right)\subset M\left(F^{\prime}\right)$. Therefore,
\[
\max_{\hat{F}\in M\left(F^{\prime}\right)}W\left(\hat{F},G\right)\geq\max_{\hat{F}\in M\left(F\right)}W\left(\hat{F},G\right),
\]
so the sender's equilibrium expected payoff increases.\qed

\subsection{Proof of Proposition \ref{prop: nonbayesian}}

%In the absence of trolls, a voter, upon observing signal $s$ updates her belief in a manner similar to the one described in Section \ref{sec:notrolls}, except that she perceives the signal to equal $\beta\left(s\right)$. She thus votes for the government if and only if $\beta\left(s\right)\geq s^{*}\left(x\right)$, where, as before, $s^{*}\left(x\right):=m^{-1}\left(\frac{x}{1-x}\right)$.
Using the same steps as in Proposition \ref{lem: strategy}, with $s$ replaced by $\beta\left(s\right)$, we can show that at the equilibrium, the share of voters voting for the government in each state  equals
\[
V_{0}=H\left(\frac{1}{2}\right)+\int_{\frac{1}{2}}^{1}\frac{\frac{F_{0}\left(\beta^{-1}\left[s^{*}\left(x\right)\right]\right)}{F_{1}\left(\beta^{-1}\left[s^{*}\left(x\right)\right]\right)}-1}{\frac{x}{1-x}\frac{F_{0}\left(\beta^{-1}\left[s^{*}\left(x\right)\right]\right)}{F_{1}\left(\beta^{-1}\left[s^{*}\left(x\right)\right]\right)}-1}dH\left(x\right),
\]
and
\[
V_{1}=H\left(\frac{1}{2}\right)+\int_{\frac{1}{2}}^{1}\frac{\frac{F_{0}\left(\beta^{-1}\left[s^{*}\left(x\right)\right]\right)}{F_{1}\left(\beta^{-1}\left[s^{*}\left(x\right)\right]\right)}-1}{\frac{F_{0}\left(\beta^{-1}\left[s^{*}\left(x\right)\right]\right)}{F_{1}\left(\beta^{-1}\left[s^{*}\left(x\right)\right]\right)}-\frac{1-x}{x}}dH\left(x\right).
\]

Recall that $s^{*}\left(x\right)>0$ for all $x>\frac{1}{2}$. If $\hat{\beta}$ is more conservative than $\beta$, then $\hat{\beta}^{-1}\left[s^{*}\left(x\right)\right]>\beta^{-1}\left[s^{*}\left(x\right)\right]$ for all $x>\frac{1}{2}$. As $\frac{F_0\left(z\right)}{F_1\left(z\right)}$ is decreasing in $z$ for all $z>0$ by MLRP (see the proof of Theorem \ref{prop: polarisation}), this implies that $V_{0}$ and $V_{1}$ are both smaller under  $\hat{\beta}$ than under $\beta$. \qed

\subsection{Proof of Proposition \ref{pro:naive}}
\label{seca:naive}
It follows from the text that the sender can guarantee that every pro-government voter votes for the government.
Thus, consider anti-government voters now. 
For  $s \ge s^*(x)$, all naive voters support the government for all
$\alpha_x \in [0,1]$. Without changing slant, increasing $\alpha_x$ above
$\alpha_x^*$ as defined in \eqref{eq:alpha_x>1/2} induces non-naive  voters to oppose the government. If $\alpha_x > \alpha_x^*$, slant cannot be adjusted so that
all $s \ge s^*(x)$ induce non-naive  voters to support the government.
Hence, increasing $\alpha_x$ beyond $\alpha_x^*$ requires a change in slant and will lose the votes of some non-naive voters.

Note that it can  never be optimal to have $\pi_x(s) > x$, since then  $\alpha_x$ could be increased or slant
reallocated profitably. Thus, define $\mathcal{Y}$ and $\mathcal{N}$ such that  $\pi_x(s)=x$ for $s \in \mathcal{Y}$ and
$\pi_x(s)<x$ for
$\mathcal{N}:=[s^*(x),\infty)\setminus\mathcal{Y}$, and denote by 
$\left|\mathcal{N}\right|$ the Lebesgue measure of $\mathcal{N}$ and by $\lambda(\alpha_x)=\mathbb{P}(s\in \mathcal{N})$ the probability with which the troll farm chooses $s\in \mathcal{N}$. For a given $\lambda(\alpha_x)$, if $\left|\mathcal{N}\right|>0$, then for  both troll and genuine signals, the probability that voters with $s\ge s^*(x)$ oppose the government is 
positive.
Concentrating   $\lambda(\alpha_x)$ on a single realization
$s'\ge s^*(x)$ instead makes support after a genuine signal certain, implying
that at the optimum $\left|\mathcal{N}\right|=0$. Thus, 
using 
\eqref{eq:f-tilde},
\[
\int_{s^*(x)}^{\infty} \tilde f(s)\, ds
=
\frac{(1-\alpha_x)}{\alpha_x}\int_{s^*(x)}^{\infty} \kappa'(s)ds
%\bigl((1 - F_1[s^*(x)])(1-x)
%- x(1 - F_0[s^*(x)])\bigr)}
%{\alpha_x(2x-1)}
= 1 - \lambda\left(\alpha_x\right)\iff \lambda\left(\alpha_x\right)=1-\frac{1-\alpha_x}{\alpha_x}\int_{s^*(x)}^{\infty} \kappa'(s)ds,
\]
where $\kappa(s)$ is defined in \eqref{eq:kappa}. Note that $\lambda\left(\alpha_x^*\right)=0$ and $\lambda\left(1\right)=1$.
The government's vote share in state
$\theta$ is then
\[
V_\theta=\phi\left[\alpha_x+(1-\alpha_x)(1-F_\theta[s^*(x)])\right]+(1-\phi)\left[\alpha_x (1-\lambda(\alpha_x))+(1-\alpha_x)(1-F_\theta[s^*(x)])\right],
\]
which  is linear in $\alpha_x$. The optimum is either
$\alpha_x=\alpha_x^*$ or $\alpha_x\rightarrow 1$, with the latter being optimal iff $\phi$ is
sufficiently large. 

Hence, the government's vote share among voters of type $x>\frac{1}{2}$ in state $\theta$ equals $\max\left\{\phi,V_{\theta}\left(x\right)\right\}$, where $V_{\theta}\left(x\right)$ is defined as in \eqref{eq:vote share x}. The government's overall vote share in state $\theta$ equals $H\left(\frac{1}{2}\right)+\int_{\frac{1}{2}}^{1}\max\left\{\phi,V_{\theta}\left(x\right)\right\}dH\left(x\right)$. Using the same logic as in the proofs of  Theorems \ref{prop: informativeness_Blackwell} and  \ref{prop: polarisation}, we can show the result.\qed

%Using the same logic as in the proof in Proposition \ref{prop: informativeness_Blackwell}, we can conclude that in each state, increased informativeness increases the vote share among anti-government voters for whom $\alpha_x=\alpha_x^*$, and hence also the aggregate vote share  increases weakly. Moreover, using the same logic  as in the proof in Proposition \ref{prop: polarisation}, we can conclude that in each state, decreased polarization increases the vote share among anti-government voters for whom $\alpha_x=\alpha_x^*$ as well, and hence also the aggregate vote share   increases weakly.\qed

% \[
% \max\left\{\phi,V_{\theta}\right\}
% \]
% V_\theta(x)=

% Informativeness and polarisation affect vote shares only at
% $\alpha_x^*$, since $\left.V_\theta\right|_{\alpha_x\rightarrow 1}=\phi$,
% while $\left.V_\theta\right|_{\alpha_x=\alpha_x^*}$ coincides with $V_\theta^*$ in Section~\ref{sec:equilibrium}.\qed

%
%
%
%
%
%.. there exists a deviation that increases the share of voters voting for the government in each state if and only if there exists $s,s^{\prime}<s^{*}\left(x\right)$ such $s<s^{\prime}$, $s\in B\left(x\right)$, and $s^{\prime}\notin B\left(x\right)$. Therefore, $B\left(x\right)$ is given as $\left[\hat{s}\left(x\right),s^{*}\left(x\right)\right)$ for some $\hat{s}\left(x\right)$.\qed

\bibliography{trollfarms}{}
\bibliographystyle{aer}

\end{document}